\begin{document}

\newcommand{\creationtime}{\today \ \ @ \theampmtime}


\renewcommand{\qedsymbol}{$\blacksquare$} 


\newtheorem{theorem}              {Theorem}     [section]
\newtheorem{lemma}      [theorem] {Lemma}
\newtheorem{corollary}  [theorem] {Corollary}
\newtheorem{proposition}[theorem] {Proposition}

\theoremstyle{remark}
\newtheorem{remark}     [theorem] {Remark}
\newtheorem{algorithm}  [theorem] {Algorithm}
\newtheorem{conjecture} [theorem] {Conjecture}

\theoremstyle{definition}         
\newtheorem{definition} [theorem] {Definition}
\newtheorem{example}    [theorem] {Example}
\newtheorem*{claim}  {Claim}
\newtheorem*{notation}  {Notation}
\newcommand{\Comment}[1]{& [\mbox{from  #1}]}
\newcommand{\Commenta}[2]{&[\mbox{since #2}]}
\newcommand{\mc}{\mathcal}
\newcommand{\mb}{\mathbf}
\newcommand{\abs}[1]{\left\lvert #1 \right\rvert}
\newcommand{\card}[1]{\abs{#1}}
\newcommand{\Network}{\mathcal{N}}
\newcommand{\node}{v}
\newcommand{\edge}{e}
\newcommand{\nodes}{\mathcal{V}}
\newcommand{\vertices}[1]{\nodes}
\newcommand{\inNodes}[1]{V_{#1}}
\newcommand{\Capacity}[1]{w(#1)}
\newcommand{\edges}{\mathcal{E}}
\newcommand{\inEdges}[1]{\mathcal{E}_{in}(#1)}
\newcommand{\outEdges}[1]{\mathcal{E}_{out}(#1)}
\newcommand{\encodingFunction}[2]{h_{#1,#2}}
\newcommand{\Dimension}[1]{\text{Dim}(#1)}
\newcommand{\TransVar}[1]{\hat{#1}}
\newcommand{\RecVar}[1]{\tilde{#1}}
\newcommand{\Code}{\mathcal{C}}
\newcommand{\Directed}[1]{\vec{#1}}
\newcommand{\alphabet}{\mathcal{A}}
\newcommand{\sources}{S}
\newcommand{\decodeAlphabet}{\mathcal{B}}
\newcommand{\delay}{\delta}
\newcommand{\integer}{\mathbb{Z}}
\newcommand{\remove}[1]{}
\newcommand{\removeTrue}[1]{}
\newcommand{\sourceSymbol}{\alpha}
\newcommand{\sourceVec}[1]{\sourceSymbol\!\left(#1\right)}
\newcommand{\sumVec}{p}
\newcommand{\edgeVar}[1]{z_{#1}}
\newcommand{\edgeSet}{C}
\newcommand{\cut}[1]{{min-cut{$\left(#1\right)$}}}
\newcommand{\cuts}[1]{\Lambda({#1})}
\newcommand{\minCut}[1]{{min-cut{$\left(#1\right)$}}}
\newcommand{\maxRange}[1]{\hat{R}_{#1}}
\newcommand{\maxRangeB}[1]{R_{#1}}
\newcommand{\receiver}{\rho}
\newcommand{\source}{\sigma}
\newcommand{\decodFunct}{\psi}
\newcommand{\setOfMessages}{x}
\newcommand{\vecInst}{\mathbf{\alpha}}
\newcommand{\messageVecInst}{w}
\newcommand{\encodeMatrix}[1]{M_{#1}}
\newcommand{\VecComp}[2]{#1_{#2}}
\newcommand{\Tree}{\mathcal{T}}
\newcommand{\tree}{T}
\newcommand{\treeIndex}[1]{J_{#1}}
\newcommand{\gap}[1]{\mathcal{C}_{\mbox{{\scriptsize gap}}}\!\left(#1\right)}
\newcommand{\codCap}[1]{\mathcal{C}\!\left(#1\right)}
\newcommand{\linCodCap}[1]{\mathcal{C}_{\mbox{{\scriptsize lin}}}\!\left(#1\right)}
\newcommand{\scalLinCodCap}[1]{\mathcal{C}_{\mbox{{\scriptsize s-lin}}}\!\left(#1\right)}
\newcommand{\routCap}[1]{\mathcal{C}_{\mbox{{\scriptsize rout}}}\!\left(#1\right)}
\newcommand{\edgeFunct}[1]{h^{(#1)}}

\newcommand{\NbinaryBlocks}[1]{{#1}^{(M)}}
\newcommand{\sumset}[1]{\mbox{{\it sum}}\!\left(#1\right)}
\newcommand{\zeroAt}[1]{h^{(#1)}}
\newcommand{\invZeroAt}[1]{h_{#1}^{-1}}
\newcommand{\maxWeight}[1]{\mbox{HW}_{\mbox{\tiny max}}\left(#1\right)}
\newcommand{\cardConst}[1]{\gamma(#1)}
\newcommand{\hammingWeight}[1]{\mbox{HW$\left(#1\right)$}}
\newcommand{\cardsources}{s}
\newcommand{\NOPROCESS}[1]{}
\newcommand{\FunIdentity}{f_{id}}
\newcommand{\FunMajority}{f_{maj}}
\newcommand{\FunSum}{f_{sum}}
\newcommand{\FunParity}{f_{parity}}
\newcommand{\FunMaximum}{f_{max}}
\newcommand{\FunMinimum}{f_{min}}
\newcommand{\footprintsize}{footprint size }
\newcommand{\footprintsizes}{footprint sizes }
\newcommand{\IndicesToIndex}{h}
\newcommand{\steinerNumber}[1]{\Pi\!\left(#1\right)}
\newcommand{\lb}[1]{l\!\left(#1\right)}
\newcommand{\sourceVecList}[2]{\sourceSymbol\!\left(#1\right)_{#2}}
\newcommand{\encodSymbol}{x}
\newcommand{\encodCoeff}[1]{\encodSymbol_{#1}}
\newcommand{\edgeCodSymbol}{x}
\newcommand{\edgeCoeff}[1]{\edgeCodSymbol_{#1}}
\newcommand{\decodSymbol}{x}
\newcommand{\decodCoeff}[1]{\decodSymbol_{#1}}

\newcommand{\encodCoeffVar}[1]{x_{#1}}
\newcommand{\edgeCoeffVar}[1]{x_{#1}}
\newcommand{\decodCoeffVar}[1]{x_{#1}}

\newcommand{\head}[1]{\textit{head($#1$)}}
\newcommand{\tail}[1]{\textit{tail($#1$)}}
\newcommand{\field}[1]{\mathbb{F}_{#1}}
\newcommand{\ring}[2]{\mathbb{F}_{#1}[#2]}
\newcommand{\receivedVec}{\decodFunct}
\newcommand{\ceil}[1]{\left\lceil #1 \right\rceil}
\newcommand{\mVec}{w}
\newcommand{\code}{\mathcal{C}}
\newcommand{\intAdd}{+}
\newcommand{\modqAdd}{\oplus}
\newcommand{\fieldAdd}{\boxplus}
\newcommand{\linearCode}[1]{$#1$-linear}
\newcommand{\ringOnA}{}
\newcommand{\polyRing}[2]{#1\!\left[ #2 \right]}
\newcommand{\zeroVecOver}[1]{\mb{0}_{#1}}
\newcommand{\zeroOver}[1]{0_{#1}}
\newcommand{\Fclass}{\mathcal{D}}
\newcommand{\define}[1]{{\it #1}}
\newcommand{\sourceSet}[1]{\indexSet_{#1}}
\newcommand{\rank}[1]{\mbox{{ rank}}\!\left(#1\right)}
\newcommand{\transferM}{T}
\newcommand{\ideal}[1]{\left\langle #1\right\rangle}
\newcommand{\grobner}{Grob$\ddot{\mbox{o}}$ner }
\newcommand{\gbasis}[1]{\mathcal{G}\!\left(#1\right)}
\newcommand{\LINENET}{1}
\newcommand{\RBF}{2}
\newcommand{\GBASISFIG}{1}
\newcommand{\GBASISFIGEQ}{4}
\newcommand{\variety}{\mathcal{V}}
\newcommand{\idealSet}{\mathcal{I}}
\newcommand{\rad}[1]{#1^{\mbox{{\footnotesize rad}}}}
\newcommand{\alg}{\mbox{{\footnotesize alg}}}
\newcommand{\indexSet}{K}
\newcommand{\monomialSet}{\mathcal{M}}
\newcommand{\lex}{>_{\mbox{\footnotesize{{\it lex}}}}}
\newcommand{\glex}{>_{\mbox{\footnotesize{{\it grlex}}}}}
\newcommand{\mdeg}[1]{\mbox{{\it \footnotesize{multideg$(#1)$}}}}
\newcommand{\pring}[2]{\field{#1}[x_1,x_2,\ldots,x_{#2}]}



\title{Computing linear functions by linear coding over networks 
\thanks{This work was supported by the National Science Foundation award CNS 0916778
        and the UCSD Center for Wireless Communications.\newline
\indent The authors are with the Department of Electrical and Computer Engineering, 
        University of California, San Diego, La Jolla, CA 92093-0407. \ \ 
 (rathnam@ucsd.edu, massimo@ece.ucsd.edu)
}}

\author{Rathinakumar Appuswamy, Massimo Franceschetti}


\maketitle

\begin{abstract}
We consider the scenario in which a set of sources generate messages in a network and a receiver node
demands an arbitrary \emph{linear function} of these messages. 
We formulate an algebraic test to determine
whether an arbitrary network can compute linear functions using \emph{linear codes}.
We identify a class of linear functions that can be computed using linear codes in every network that satisfies a natural
cut-based condition.
Conversely, for another class of linear functions, we show that the cut-based condition 
does not guarantee the existence of a linear coding solution.
For linear functions over the binary field, the two classes are complements of each other. 

\end{abstract}

\thispagestyle{empty}


\section{Introduction}
In many practical networks, including sensor networks and vehicular networks, 
receivers demand a function of the messages generated by the sources that are distributed
across the network rather than the generated messages. This  situation is studied in the framework of network computing \cite{computing1,computing2,ramam,RaiDey2009,Ma,Nazer,Kumar1}.
The classical network coding model of Ahlswede, Cai, Li, and Yeung~\cite{Ahlswede-Cai-Li-Yeung-IT-Jul00} can be viewed as a the special case of network computing in which the  function to be computed at the receivers corresponds to a subset of the source messages and communication occurs over a  network with noiseless links. 

In the same noiseless set up of~\cite{Ahlswede-Cai-Li-Yeung-IT-Jul00}, we consider the scenario in which
a set of source nodes 
generate messages over a finite field and a single receiver node computes a
linear function  of these messages.
We ask whether this linear function can be computed by performing linear coding operations
at the intermediate nodes. 

In multiple-receiver networks, if each receiver node demands a subset of
the source messages (which is an example of a  linear function),
then Dougherty, Freiling, and  Zeger \cite{ken1}  showed  that linear codes are not sufficient to recover the source messages.
Similarly, if each  receiver node demands the sum of the source messages,
then Ray and Dei~\cite{RaiDey2009} showed that linear  codes are also not sufficient to recover the source messages.
In contrast, in single-receiver networks linear  codes are sufficient 
for both the above problems and
a simple cut-based condition can be used to test whether a linear solution exists. 

Our contribution is as follows.
We extend above results investigating if a similar cut-based condition guarantees the existence of a linear solution
when the receiver node demands an arbitrary linear function of the source messages. We identify two classes of functions, one for which the cut-based condition is sufficient for solvability and the other for which it is not. These classes are complements of each other when the source messages are over the binary field.
Along the way, we develop an algebraic framework to study linear codes and provide an algebraic condition to test whether a linear solution exists, similar to the one given by Koetter and M\'{e}dard~\cite{Koetter-Medard-IT-Oct03} for classical network coding.

The paper is organized as follows. We formally introduce the network computation model in Section~\ref{Sec:intro}. 
In Section~\ref{Sec:main} we develop the necessary algebraic tools to study linear codes and introduce the cut-based condition. 
In Section~\ref{Sec:linearFunctions}, we show the main results for the two classes of functions.
Section~\ref{Sec:conclusions} concludes the paper, mentioning some open problems.
\subsection{Network model and preliminaries} \label{Sec:intro}
In this paper, a \textit{network} $\Network$ consists of a finite,
directed acyclic multigraph $G= (\nodes,\edges)$, 
a set of \textit{source nodes} $\sources = \{\source_1, \dots, \source_\cardsources \} \subseteq \ \nodes$, 
and a \textit{receiver} $\receiver \in \nodes$. 
Such a network is denoted by $ \Network = (G,\sources,\receiver)$.
We use the word ``graph" to mean a multigraph, 
and ``network" to mean a single-receiver network.
We assume that  $\receiver \notin \sources$, 
and that the graph
$G$ contains a directed path from every node in $\nodes$ to the receiver $\receiver$. 
For each node $u \in \nodes$, 
let $\inEdges{u}$ and $\outEdges{u}$ denote the in-edges and out-edges of $u$ respectively. 
We  also assume
(without loss of generality)
that if a network node has no in-edges,
then it is a source node.
We use $s$ to denote the number of sources $\card{S}$ in the network.

An \textit{alphabet} $\alphabet$ is a nonzero finite field.
For any positive integer $m$,
any vector $x \in \alphabet^{m}$, 
and any $i$,
let $\VecComp{x}{i}$ denote the $i$-th component of $x$.
For any index set $\indexSet = \{i_1,i_2,\ldots,i_q\} \subseteq \{1,2,\ldots,m\}$ with $i_1 < i_2 < \ldots < i_q$, let $\VecComp{x}{\indexSet}$ denote the vector 
$(\VecComp{x}{i_1},\VecComp{x}{i_2},\ldots,\VecComp{x}{i_q}) \in \alphabet^{\card{\indexSet}}$. 

The 
\textit{network computing} problem consists of a network $\Network$, a source alphabet $\alphabet$, and a 
\define{target function}
$$
f \ : \ \alphabet^s \longrightarrow \decodeAlphabet
$$
where $\decodeAlphabet$ is the \define{decoding alphabet}.
A target function $f$ is \define{linear} if there exists a matrix $T$ over $\alphabet$ such that
$$
f(x) = T x^{t}, \quad \forall \ x \in \alphabet^s
$$
where `$t$' denotes matrix transposition. For linear target functions the decoding alphabet is of the form $\alphabet^l$, with $1\leq l \leq s$.
Without loss of generality, we  assume
that $T$ is full rank (over $\alphabet$) and has no zero columns.
For example, if $T$ is the $s \times s$ identity matrix, then the receiver demands the complete
set of source messages, and this corresponds to the  classical network coding problem.
On the other hand, if $T$ is the row vector of $1$'s, then the receiver demands a sum (over $\alphabet$)
of the source values. 
Let $n$ be a positive integer. 
Given a network $\Network$ with source set $\sources$ and 
alphabet $\alphabet$, a \textit{message generator}
is a mapping
$$\sourceSymbol \ : \ \sources \longrightarrow \alphabet^n.$$
For each source $\source_i \in \sources$, 
$\sourceVec{\source_i}$ is called a \textit{message vector} and it can be viewed as an 
element of $\field{q^n}$ (rather than as a vector).
%
\begin{definition}
A \textit{linear network code} in a network $\Network$ consists of the following: 
\begin{itemize}
\item[(i)]
Every edge $\edge \in \edges$ \textit{carries an element}  of $\field{q^n}$ and this element is denoted by
$z_{\edge}$. 
For any node $\node \in \nodes - \receiver$ and any out-edge $\edge \in \outEdges{\node}$, 
the network code specifies an {\it encoding function} 
$h^{(e)}$ of the form: 
\begin{align} \label{Eq:encoding}
h^{(e)}
&= 
\begin{cases}
\encodCoeff{1,e} \sourceVecList{u}{} + 
\displaystyle \sum_{\hat{e} \in \inEdges{u}} \edgeCoeff{\hat{e},e} \edgeVar{\hat{e}} & \; \text{if} \; u \in \sources \\
\displaystyle \sum_{\hat{e} \in \inEdges{u}} \edgeCoeff{\hat{e},e} \edgeVar{\hat{e}} & \; \text{otherwise}
\end{cases}
\end{align}
where $\edgeCoeff{\hat{e},e}, \encodCoeff{1,e} \in \field{q^n}$ for all $\hat{e} \in \inEdges{u}$.
\item[(ii)] 
%
The {\it decoding function} $\decodFunct$ outputs a 
vector of length $l$ whose $j$-th component is of the form:
\begin{align} \label{Eq:decoding}
\displaystyle \sum_{e \in \inEdges{\receiver}} \decodCoeff{e,j} \edgeVar{e}
\end{align}
where $\decodCoeff{e,j} \in \field{q^n}$ for all $e \in \inEdges{\receiver}$.
The arithmetic in \eqref{Eq:encoding} and \eqref{Eq:decoding} is performed over 
$\field{q^n}$.
\end{itemize}
\end{definition}
%
In this paper, by a {\it network code}, we always mean a linear network code.
In the literature, the class of network codes we define here is referred to as  {\it scalar linear codes.} These codes were introduced and studied in 
\cite{Koetter-Medard-IT-Oct03}.  A more general class of linear codes over $\field{q^n}$ were defined and studied in \cite{ken1,ken2}.

Depending on the context, we may
view  $z_{\edge}$ as a  vector of length-$n$
over $\field{q}$ or as an element of $\field{q^n}$.
Without explicit mention, we use the fact 
that the addition of $a,b \in \field{q^n}$ as elements of a finite 
field coincides with their sum as elements of a vector space over 
$\field{q}$.
Furthermore, we also view $\field{q}$ as a subfield of $\field{q^n}$ 
without explicitly stating the inclusion map.
Let $z_{e_1}, z_{e_2}, \ldots, z_{e_{\card{\inEdges{\receiver}}}}$ denote the vectors carried by the in-edges of the receiver. 
\begin {definition} \label{Def:sol}
A linear network code over $\field{q^n}$ is called  \textit{a linear solution for computing $f$ in $\Network$} (or simply a {\it linear solution} if $f$ and $\Network$ are clear from the context)
if the decoding function $\decodFunct$ is such that for every message generator $\alpha$,
%
\begin{align}
\decodFunct\left(\edgeVar{e_1},\cdots,\edgeVar{e_{\card{\inEdges{\receiver}}}}\right)_j
&=  f\!\left(\sourceVec{\source_1}_j,\cdots,\sourceVec{\source_{\cardsources}}_j\right) \quad \mbox{for all $j \in \{1,2,\ldots,n\}$}. \label{Eq:decodingFunction}
\end{align}
%
\end{definition}
\begin{remark}
Each source  generates $n$ symbols over $\field{q}$ (viewing $\field{q^n}$ as a vector space over $\field{q}$) and the decoder computes  the target function $f$ for each set of
source symbols.
\end{remark}
A set of edges $C \subseteq \edges$ is said to \textit{separate} 
sources $\source_{m_1}, \ldots, \source_{m_d}$
from the receiver $\receiver$, 
if for each $i \in \{1, 2,\ldots, d\}$,
every path from
$\source_{m_i}$ to $\receiver$ contains at least one edge in $C$.
A set $C \in \edges$ is said to be a \textit{cut} if it separates at least one
source from the receiver. Let $\cuts{\Network}$ denote the set of all cuts in network $\Network$.

For any matrix $\transferM \in \field{q}^{l \times s}$,
let $T_i$ denote its $i$-th column.
For an index set $\sourceSet{} \in \{1,2,\ldots,s\}$, 
let $\transferM_{\sourceSet{}}$ denote the 
$l \times \card{K}$ submatrix of $\transferM$ 
obtained by choosing the columns of $\transferM$ indexed by $\sourceSet{}$.
If $C$ is a cut in a network $\Network$, we define the set 
$$
\sourceSet{C} = \{ i \in \sources : \mbox{$C$ disconnects $\source_i$ from $\receiver$}\}.
$$
Finally, for any  network $\Network$ and matrix $T$, we define 
\begin{align} \label{Eq:mincut}
\mbox{\cut{\Network,T}} = \underset{ C \in \cuts{\Network} }{\min} \;\; \frac{ \card{C}}{\rank{\transferM_{\sourceSet{C}} }}.
\end{align}
%

%

%
\section{Algebraic framework} \label{Sec:main}
\subsection{An algebraic test for the existence of a linear solution} \label{Sec:algebra}
Linear solvability for the classical network coding problem was shown to 
be equivalent to the existence of a non-empty algebraic variety in \cite{Koetter-Medard-IT-Oct03}. In the following, we present an analogous characterization for computing linear functions, providing an algebraic test to determine whether a linear solution for computing  a linear function exists. 
The reverse problem of constructing a multiple-receiver network coding (respectively, network computing) problem given an arbitrary set of polynomials,  which is solvable if and only if the corresponding set of polynomials is simultaneously solvable is considered in reference \cite{ken2} 
(respectively, \cite{RaiDey2009}).

We begin by giving some definitions and stating a technical lemma, followed by the main theorem below.

%
For any edge $e = (u,v) \in \edges$, 
let $\head{e} = v$ and $\tail{e} = u$.
Associated with a linear code over $\field{q^n}$,
we define the following three types of matrices: 
\begin{itemize}
\item For each source $\source_\tau \in \sources$, 
define the $1 \times \card{\edges}$ matrix
$A_{\tau}$ as follows:
\begin{align} \label{Eq:MatrixA}
\left(A_\tau \right)_{1,j} = 
\begin{cases}
\encodCoeff{1,e_j}  & \; \text{if} \; e_j \in \outEdges{\source_t} \\
0 									& \; \text{otherwise}.
\end{cases}
\end{align}
\item Similarly define the $l \times \card{\edges}$ matrix $B$
as follows:
\begin{align} \label{Eq:MatrixB}
B_{i,j} = 
\begin{cases}
\decodCoeff{e_j,i}  & \; \text{if} \; e_j \in \inEdges{\receiver} \\
0 							    & \; \text{otherwise}.
\end{cases}
\end{align}
\item Define the $\card{\edges} \times \card{\edges}$
matrix $F$ as follows:
\begin{align} \label{Eq:MatrixF}
F_{i,j} = 
\begin{cases}
\edgeCoeff{e_i,e_j}   & \; \text{if} \; \head{e_i}=\tail{e_j} \\
0 								  	& \; \text{otherwise}.
\end{cases}
\end{align}
\end{itemize}
Since the graph $G$ associated with the network is acyclic, 
we can assume that the edges $e_1,e_2,\ldots$ 
are ordered such that the matrix 
$F$ is strictly upper-triangular.
Let $I$ denote the identity matrix of suitable dimension.
Consider a network $\Network$ 
with alphabet $\field{q}$ 
and consider a linear code over $\field{q^n}$ with associated matrices 
$A_1,A_2,\ldots,A_{\cardsources},B$ and $F$. 
For every $\tau \in \left\{1,2,\ldots,s\right\}$, 
define the $1 \times l$ matrix 
\begin{align} \label{Eq:matrixMt}
M_\tau=A_\tau(I - F)^{-1} B^{t}.
\end{align}
Now let $x_{A}$ be a vector containing all the non-zero entries of the matrices $A_{\tau}, \tau=1,2,\cdots,s$,
and let $x_{B}$ (respectively, $x_F$) be a vector containing all the non-zero entries 
of the matrix $B$ (respectively, $F$).

By abusing notation, depending on the context we may view $x_{e_i,e_j}$, $x_{i,e_j}$, 
$x_{e_i,j}$ as elements of $\field{q^n}$ or 
as indeterminates.
Thus, each of the matrices defined above may either be a matrix over $\field{q^n}$ or 
a matrix 
over the polynomial ring 
$R = \polyRing{\field{q^n}}{x_A,x_F,x_B}$.
The context should make it clear which of these two notions is being referred to at any given point.
%
\begin{lemma} \label{Lemma:Koetter}
The following two statements hold:
\begin{enumerate}
\item The matrix $I-F$ has a polynomial inverse with 
coefficients in  $\polyRing{\field{q^n}}{x_F}$, 
the ring of polynomials in the variables
constituting $x_F$.
\item 
The decoding function can be written as
$$
\displaystyle \sum_{\tau=1}^{s} \sourceVec{\source_{\tau}} A_{\tau} (I - F)^{-1} B^{t}
$$
\remove{
\item If the determinant of the $c \times c$ matrix
\begin{align*}
M = \prod_{\tau=1}^{s} M_{\tau}
\end{align*}
is nonzero over the ring $R$, 
then there exists a finite extension field $\field{q^n}$ of $\field{q}$ which satisfies the property that 
the variables $\encodCoeffVar{j,e},\ldots,\edgeCoeffVar{e',e}$,
and $\decodCoeffVar{e,l}$
can be assigned values over $\field{q^n}$ such that
the determinant of the matrix $M$ is nonzero over
the field $\field{q^n}$.
}
\end{enumerate}
\end{lemma}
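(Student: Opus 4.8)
The lemma has two parts, and both are essentially bookkeeping once the right structural facts about $F$ and the encoding recursion are identified. I will treat them in order.

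For part (1), the plan is to exploit the strict upper-triangularity of $F$. Since the edges $e_1, e_2, \ldots$ are ordered so that $F_{i,j} = 0$ whenever $\head{e_i} \neq \tail{e_j}$, and the graph $G$ is acyclic, we may assume $F$ is strictly upper-triangular, hence nilpotent: $F^{\card{\edges}} = 0$. The key observation is then the finite Neumann series identity
\begin{align*}
(I - F)^{-1} = I + F + F^2 + \cdots + F^{\card{\edges}-1},
\end{align*}
which one verifies by multiplying through by $(I-F)$ and telescoping. Each entry of each power $F^k$ is a polynomial in the indeterminates $x_F$ (products of the $\edgeCoeff{e_i,e_j}$ along length-$k$ paths), so the inverse exists and its entries lie in $\polyRing{\field{q^n}}{x_F}$, as claimed. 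This part has no real obstacle; it is the standard Koetter--M\'edard transfer-matrix argument specialized to our setting.

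For part (2), the plan is to unwind the encoding recursion in \eqref{Eq:encoding} and show that the symbol $\edgeVar{e}$ carried on each edge is a fixed linear combination of the source message vectors, with coefficients given precisely by the entries of $A_\tau (I-F)^{-1}$. Writing $z$ for the column vector $(\edgeVar{e_1}, \ldots, \edgeVar{e_{\card{\edges}}})^t$ and collecting the source contributions, the recursion reads $z = F^t z + \sum_{\tau} \sourceVec{\source_\tau} A_\tau^t$ (the matrix $F$ encodes exactly the coefficient $\edgeCoeff{\hat e, e}$ applied to $\edgeVar{\hat e}$ when $\head{\hat e} = \tail{e}$, and $A_\tau$ injects the source symbol at its out-edges). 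Solving gives $z^t = \sum_{\tau} \sourceVec{\source_\tau} A_\tau (I-F)^{-1}$, where part (1) guarantees the inverse is well-defined. Applying the decoding map, whose $j$-th output is $\sum_{e \in \inEdges{\receiver}} \decodCoeff{e,j} \edgeVar{e}$ and which is therefore represented by the matrix $B$, yields the stated expression $\sum_{\tau=1}^{s} \sourceVec{\source_\tau} A_\tau (I-F)^{-1} B^t$.

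The only point requiring care is the induction establishing $z = F^t z + \sum_\tau \sourceVec{\source_\tau} A_\tau^t$: one must check that the edge ordering making $F$ strictly upper-triangular is consistent with the topological order in which edge symbols become defined, so that the recursion is genuinely well-founded and the formal solution coincides with the actual symbols carried. I expect this verification of the matrix encoding of the recursion, and in particular the correct placement of transposes and the matching of indices between $F$, the $A_\tau$, and $B$, to be the main (though still routine) obstacle. Once the recursion is cast correctly in matrix form, both parts follow immediately from nilpotency.
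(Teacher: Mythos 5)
Your proposal is correct and follows exactly the standard Koetter--M\'edard transfer-matrix argument (nilpotency of the strictly upper-triangular $F$, termination of the Neumann series, and unwinding of the encoding recursion $z = zF + \sum_\tau \sourceVec{\source_\tau}A_\tau$), which is precisely what the paper relies on: its proof of this lemma consists only of citations to Lemma~2 and Theorem~3 of \cite{Koetter-Medard-IT-Oct03}. You have simply written out in full the argument the paper defers to that reference, and your index bookkeeping for $F$, $A_\tau$, and $B$ checks out.
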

\begin{proof}
The first assertion is a restatement of \cite[Lemma~2]{Koetter-Medard-IT-Oct03}
and the second assertion follows from \cite[Theorem~3]{Koetter-Medard-IT-Oct03}.
\end{proof}
\begin{definition}
Let $R$ be a polynomial ring.
The ideal generated by a subset $X \subset R$ and denoted by $\ideal{X}$
is the smallest ideal in $R$ containing $X$.
\end{definition}
Let $\Network$ be a network
with alphabet $\field{q}$.
Let $R = \polyRing{\field{q}}{x_A,x_F,x_B}$ and $T \in \field{q}^{l \times s}$.
Consider a linear network code for computing the linear function corresponding to $T$ in $\Network$
and the associated matrices $M_{\tau},\tau=1,2,\ldots,s$ over $R$
and define 
$$Z_{\tau} = (T_{\tau})^{t} - M_{\tau} \; \mbox{for $\tau = 1,2,\ldots,s$}.$$
Let $J$ denote the ideal generated 
by the elements of $Z_{\tau} \in R^{1 \times l}, \tau=1,2,\ldots,s$ 
in the ring $R$.
More formally, let
$$
J = \ideal{ \left\{ \left\{
\left(Z_{\tau}\right)_{1},
\left(Z_{\tau}\right)_{2},
\ldots,
\left(Z_{\tau}\right)_{l}  \right\}: \, \tau = 1,2,\dots,s \right\} }.
$$
The polynomials $\left(Z_i\right)_{j}$ are referred to as the {\it generating polynomials} of the ideal $J$.
%
We denote the \grobner basis of an ideal generated by subset $X \subset R$ of a 
polynomial ring $R$ by $\gbasis{X}$. 
The following theorem is a consequence of Hilbert Nullstellensatz 
(see \cite[Lemma~VIII.7.2]{Hungerford} and the remark 
after \cite[Proposition~VIII.7.4]{Hungerford}). 
\begin{theorem} \label{Th:gbasisCondition}
Consider a network  $\Network$ with alphabet $\field{q}$
and the linear target function $f$ corresponding to a matrix
$\transferM \in \alphabet^{l\times s}$. 
There exists an $n > 0$ and a linear solution over $\field{q^n}$ for computing
$f$ in $\Network$ if and only if $\gbasis{J} \neq \{1\}$.
\end{theorem}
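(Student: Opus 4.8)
The plan is to translate the existence of a linear solution into the vanishing of the generating polynomials on an algebraic variety, and then invoke the Nullstellensatz to convert this geometric condition into the ideal-theoretic test $\gbasis{J} \neq \{1\}$. First I would unwind the definitions: by Lemma~\ref{Lemma:Koetter}, for any assignment of field values to the variables $x_A, x_F, x_B$, the decoding function applied to a message generator produces $\sum_{\tau=1}^{s} \sourceVec{\source_\tau} M_\tau$, where each $M_\tau = A_\tau (I-F)^{-1} B^t$ is a $1 \times l$ matrix whose entries are polynomials in those variables. A linear solution for computing $f$ exists over $\field{q^n}$ precisely when this output equals $f(\sourceVec{\source_1}, \ldots, \sourceVec{\source_s}) = \sum_{\tau=1}^{s} \sourceVec{\source_\tau} (T_\tau)^t$ for every choice of message vectors. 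Since the message vectors $\sourceVec{\source_\tau}$ are free elements of $\field{q^n}$, matching coefficients shows this holds if and only if $M_\tau = (T_\tau)^t$ for each $\tau$, i.e. if and only if $Z_\tau = (T_\tau)^t - M_\tau$ evaluates to the zero vector for every $\tau$.

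Next I would reformulate this as a statement about the common zero set of the generating polynomials. The previous step shows that a linear solution over $\field{q^n}$ exists if and only if there is a point in $(\field{q^n})^N$ (where $N$ is the total number of variables) at which all the polynomials $(Z_\tau)_j$ vanish simultaneously — that is, the variety $\variety(J)$ has a point over some finite extension $\field{q^n}$ of $\field{q}$. The equivalence between solvability and nonemptiness of the variety over an algebraic closure is exactly the kind of correspondence established in the classical setting of \cite{Koetter-Medard-IT-Oct03}, and here it is driven by the fact that the decoding constraint is a polynomial identity that must hold identically in the source symbols.

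Finally I would apply the Nullstellensatz in the form cited. The weak Nullstellensatz (and its finite-field refinement, via \cite[Lemma~VIII.7.2]{Hungerford} together with the remark after \cite[Proposition~VIII.7.4]{Hungerford}) states that the polynomials in $J$ have no common zero over the algebraic closure of $\field{q}$ if and only if $1 \in J$, equivalently $J = R$, equivalently the reduced \grobner basis $\gbasis{J} = \{1\}$. Contrapositively, $\gbasis{J} \neq \{1\}$ holds exactly when $\variety(J)$ is nonempty over the closure; one then argues that any common zero over the closure can be realized over a finite extension $\field{q^n}$, since the coordinates of any such point are algebraic over $\field{q}$ and hence lie in a finite extension, yielding the desired $n$. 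Combining these three reductions gives the stated equivalence.

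The main obstacle I anticipate is the passage between a common zero over the algebraic closure and an actual linear solution over a \emph{finite} field $\field{q^n}$. The Nullstellensatz naturally produces a solution over $\overline{\field{q}}$, and one must ensure that a genuine finite-field assignment exists; this is where the finite-field version of the Nullstellensatz invoked via the remark after \cite[Proposition~VIII.7.4]{Hungerford} is essential, and care is needed because the variable $n$ is not fixed in advance but is allowed to grow. A secondary technical point is justifying that the coefficient-matching argument (forcing $M_\tau = (T_\tau)^t$) is valid even though $M_\tau$ has entries in $\field{q^n}$ rather than the base field $\field{q}$ — here one uses that $\field{q} \subseteq \field{q^n}$ and that the identity must hold for all message generators, including those whose components range freely over $\field{q^n}$.
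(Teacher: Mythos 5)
Your proposal is correct and follows essentially the same route as the paper's proof: both use Lemma~\ref{Lemma:Koetter} to express the decoded vector as $\sum_\tau \sourceVec{\source_\tau} M_\tau$, match coefficients against $T$ to reduce solvability to the simultaneous vanishing of the generating polynomials $(Z_\tau)_j$, and then invoke the Nullstellensatz in the form of \cite[Lemma~VIII.7.2]{Hungerford} to equate nonemptiness of the variety over $\overline{\field{q}}$ with $J \neq R$, i.e.\ $\gbasis{J} \neq \{1\}$. Your closing remark that any point over $\overline{\field{q}}$ has coordinates algebraic over $\field{q}$ and hence lies in some $\field{q^n}$ is exactly the justification the paper leaves implicit in its final sentence.
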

\begin{proof}
From Lemma~\ref{Lemma:Koetter}, the vector computed at the receiver can be written as
\begin{align} \label{Eq:condition1}
\decodFunct\left(\edgeVar{e_1},\cdots,\edgeVar{e_{\card{\inEdges{\receiver}}}}\right) & = \begin{pmatrix}
M_1^{t} & M_2^{t} & \cdots & M_s^{t} \end{pmatrix} \
\begin{pmatrix}
\sourceVec{\source_1} \\
\sourceVec{\source_2} \\
\vdots \\
\sourceVec{\source_{\cardsources}}
\end{pmatrix}.
\end{align}
On the other hand, to compute the linear function corresponding to $T$, the decoding function must satisfy
\begin{align} \label{Eq:condition2}
\decodFunct\left(\edgeVar{e_1},\cdots,\edgeVar{e_{\card{\inEdges{\receiver}}}}\right)
&= T \ 
\begin{pmatrix}
\sourceVec{\source_1} \\
\sourceVec{\source_2} \\
\vdots \\
\sourceVec{\source_{\cardsources}}
\end{pmatrix}. & \Comment{\eqref{Eq:decodingFunction}}
\end{align}
It follows that the encoding coefficients in a linear solution must be such that
\begin{align}
(T_{\tau})^{t} - M_{\tau} & = 0  \; \mbox{for $\tau = 1,2,\ldots,s$}. & \Comment{\eqref{Eq:condition1} and \eqref{Eq:condition2}}
\end{align}
If we view the coding coefficients as variables, then it follows that a solution must simultaneously
solve the generating polynomials of the corresponding ideal $J$.
By \cite[Lemma~VIII.7.2]{Hungerford}, such a solution exists over the algebraic closure $\bar{\field{q}}$ of $\field{q}$ if and only if
$J \neq \polyRing{\field{q}}{x_A,x_F,x_B}$.
Furthermore, $J \neq \polyRing{\field{q}}{x_A,x_F,x_B}$ if and only if $\gbasis{J} \neq \{1\}$.
Moreover,  a solution exists over the algebraic closure $\bar{\field{q}}$ of $\field{q}$ if and only if
it exists over some extension field $\field{q^n}$ of $\field{q}$ and the proof is now complete.
\end{proof}
\subsection{Minimum cut condition} \label{Sec:cuts}
It is clear that the set of linear functions that can be solved in a network depends on the network topology. It is easily seen that
a linear solution for computing a linear target function corresponding to $T \in \field{q}^{l \times s}$ exists only if 
the network $\Network$ is such that
for every $C \in \cuts{\Network}$, 
the value of the cut $\card{C}$ is at least the rank of the submatrix $T_{K_C}$ (recall that $K_C$ is the index set of the sources separated by the cut $C$).
This observation is stated in the following lemma which is an immediate consequence of the cut-based bound in \cite[Theorem~2.1]{computing1}. 
\begin{lemma} \label{Lemma:ub}
For a network  $\Network$, a necessary condition for the existence of a linear solution  for computing the target
function corresponding to $T \in \field{q}^{l \times s}$ 
is
$$
\mbox{\cut{\Network,T}} \ge 1.
$$
\end{lemma}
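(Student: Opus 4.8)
The plan is to show that if $\mbox{\cut{\Network,T}} < 1$, then no linear solution can exist, which is the contrapositive of the claimed necessary condition. Unpacking the definition in \eqref{Eq:mincut}, the condition $\mbox{\cut{\Network,T}} \ge 1$ is equivalent to requiring that for every cut $C \in \cuts{\Network}$ we have $\card{C} \ge \rank{\transferM_{\sourceSet{C}}}$. So the task reduces to establishing, for an arbitrary cut $C$, the inequality $\card{C} \ge \rank{\transferM_{\sourceSet{C}}}$ as a necessary condition for computability.

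Since the statement is flagged as an immediate consequence of the cut-based bound in \cite[Theorem~2.1]{computing1}, the core of the argument is an information-theoretic (or dimension-counting) cut argument. First I would fix an arbitrary cut $C$ and let $\sourceSet{C}$ be the set of sources it separates from $\receiver$. The key observation is that every path from a source $\source_i$ with $i \in \sourceSet{C}$ to the receiver must pass through an edge of $C$; hence all information the receiver obtains about the messages $\{\sourceVec{\source_i} : i \in \sourceSet{C}\}$ is funneled through the $\card{C}$ edges in $C$, each of which carries a single symbol from $\field{q^n}$ (equivalently, an $n$-dimensional vector over $\field{q}$). I would then argue that a correct decoder must be able to recover the target function values restricted to these sources for \emph{every} message generator. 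Because $f$ is linear and corresponds to $T$, the set of distinct output values the decoder must distinguish, as the messages of the separated sources range over $\alphabet^n$ (with the other sources held fixed), has size determined by $\rank{\transferM_{\sourceSet{C}}}$: these outputs take $q^{n \cdot \rank{\transferM_{\sourceSet{C}}}}$ distinct values. The edges in $C$ can jointly carry at most $q^{n \card{C}}$ distinct configurations, so distinguishability forces $n\card{C} \ge n \cdot \rank{\transferM_{\sourceSet{C}}}$, i.e. $\card{C} \ge \rank{\transferM_{\sourceSet{C}}}$.

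I expect the main obstacle to be making the ``information must pass through $C$'' step fully rigorous, rather than any algebraic manipulation. Concretely, one must verify that the values carried on the edges of $C$ are functions only of the source messages together with coding coefficients, and that by varying just the separated sources' messages one genuinely forces $q^{n\cdot\rank{\transferM_{\sourceSet{C}}}}$ distinct decoder outputs while the other sources are frozen; the full-rank and no-zero-column assumptions on $T$ are what guarantee this count is exactly controlled by $\rank{\transferM_{\sourceSet{C}}}$. Once the per-cut inequality $\card{C} \ge \rank{\transferM_{\sourceSet{C}}}$ is secured for every $C$, dividing through and taking the minimum over all cuts yields $\mbox{\cut{\Network,T}} \ge 1$ directly from \eqref{Eq:mincut}, completing the proof. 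Since the heavy lifting is already packaged in \cite[Theorem~2.1]{computing1}, I would keep the write-up short: restate the general cut bound in the present notation and observe that specializing it to the linear target function $f$ gives exactly the stated inequality.
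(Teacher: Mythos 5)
Your proposal is correct and takes essentially the same route as the paper: the paper gives no separate proof, stating the lemma as an immediate consequence of the cut-based bound in \cite[Theorem~2.1]{computing1}, which is exactly the reduction you make. Your additional sketch of the counting argument behind that bound (comparing the $q^{n\card{C}}$ configurations on the cut edges with the $q^{n\cdot\mbox{\footnotesize rank}(T_{K_C})}$ outputs the decoder must distinguish) is a faithful account of what the cited theorem packages, so there is no substantive difference in approach.
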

We now consider two special cases.
First, consider the case  in which the receiver demands all the source messages. The corresponding $T$ is given 
by the $s \times s$ identity matrix $I$ and  the condition $\mbox{\cut{\Network,T}} \ge 1$
reduces to 
$$
\frac{\card{C}}{\card{K_C}} \ge 1 \quad \forall \ C \in \cuts{\Network}
$$
i.e., the number of edges in the cut be at least equal to the number of sources separated by the cut.
Second, consider the case in which  the receiver demands the sum of the source messages. The corresponding
matrix $T$ is an $1 \times s$ row vector and the requirement that $\mbox{\cut{\Network,T}} \ge 1$ reduces to
$$
\card{C} \ge 1 \quad \forall \ C \in \cuts{\Network}
$$
i.e., all the sources have a directed path to the receiver.
For both of the above  cases, the cut condition in Lemma~\ref{Lemma:ub} is also sufficient for the existence of a solution. This is shown in
\cite[Theorem~3.1 and Theorem~3.2]{computing1} and is reported in the following Lemma:
\begin{lemma} \label{Th:extremeMatrix}
Let $l \in \{1,s\}$.
For a network $\Network$ with the linear target function $f$ corresponding to a matrix $\transferM \in \alphabet^{l\times s}$, 
a linear solution exists if and only if $\mbox{\cut{\Network,T}} \ge 1$.
\end{lemma}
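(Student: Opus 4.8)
The necessity of $\mbox{\cut{\Network,\transferM}}\ge 1$ is exactly Lemma~\ref{Lemma:ub}, so the whole task is to establish sufficiency for the two extreme values $l\in\{1,s\}$. I would do this constructively, exhibiting explicit codes in each case; in fact plain routing together with a single fixed linear decoding map will suffice in both, so the algebraic machinery of Theorem~\ref{Th:gbasisCondition} is not strictly needed (though it could serve as an alternative certificate, as noted below).

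\textbf{The case $l=s$.} Here $\transferM$ is a full-rank $s\times s$ matrix, hence invertible, so every column submatrix $\transferM_{\sourceSet{C}}$ has rank $\card{\sourceSet{C}}$ and the hypothesis $\mbox{\cut{\Network,\transferM}}\ge 1$ reads $\card{C}\ge\card{\sourceSet{C}}$ for every cut $C$. The first reduction is to the identity matrix: if the receiver can reproduce the whole vector $(\sourceVec{\source_1},\ldots,\sourceVec{\source_s})$, then pre-composing its decoder with the fixed matrix $\transferM$ (whose entries lie in $\field{q}\subseteq\field{q^n}$) yields a solution for $\transferM$. To solve the identity I would pass to the auxiliary unit-capacity network obtained by adjoining a super-source $\omega$ with one edge $\omega\to\source_i$ to each source and taking $\receiver$ as sink, and show that every $\omega$--$\receiver$ edge cut has size at least $s$. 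Indeed such a cut removes the super-edges for some index set $I_0$ together with a set $C$ of original edges, and to disconnect $\omega$ it must separate every source outside $I_0$ from $\receiver$; hence $\{1,\ldots,s\}\setminus I_0\subseteq\sourceSet{C}$ and $\card{C}+\card{I_0}\ge\card{\sourceSet{C}}+\bigl(s-\card{\sourceSet{C}}\bigr)=s$ by the cut hypothesis. By max-flow/min-cut there are then $s$ edge-disjoint $\omega$--$\receiver$ paths, and since a flow of value $s$ saturates all $s$ super-edges, exactly one path starts at each $\source_i$. Routing $\sourceVec{\source_i}$ along its path (each edge simply carries the symbol of the unique path through it) delivers the sources on distinct in-edges of $\receiver$, solving the identity and therefore $\transferM$.

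\textbf{The case $l=1$.} Now $\transferM=(c_1,\ldots,c_s)$ with every $c_i\neq 0$, so $\rank{\transferM_{\sourceSet{C}}}=1$ whenever $\sourceSet{C}\neq\emptyset$ and the hypothesis reduces to $\card{C}\ge 1$ for all cuts, which holds automatically since every source has a directed path to $\receiver$. I would build a confluent code: choose for each non-receiver node $v$ one out-edge $e_v$ lying on a directed path from $v$ to $\receiver$, so that following the chosen edges gives every node a unique path to $\receiver$ (the walk is strictly increasing in the topological order and can terminate only at $\receiver$). Set the code so that $e_v$ carries the sum of the symbols on the in-edges of $v$, plus $c_i\,\sourceVec{\source_i}$ when $v=\source_i$, while every non-chosen out-edge carries $0$, and let the decoder add the symbols on the chosen in-edges of $\receiver$. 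A straightforward induction in topological order then shows that $\edgeVar{e_v}$ equals $\sum c_i\sourceVec{\source_i}$ taken over exactly those sources whose unique chosen path passes through $v$; since these paths cover each source once and terminate at the in-edges of $\receiver$, the decoder outputs $\sum_{i} c_i\,\sourceVec{\source_i}$.

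\textbf{Main obstacle.} The two inductive verifications are routine. The only genuinely substantive step is the $l=s$ argument, namely converting the combinatorial hypothesis $\card{C}\ge\card{\sourceSet{C}}$ into the existence of $s$ edge-disjoint paths, one from each source; the care lies in the super-source reduction and in checking that a min cut of the augmented network equals $s$, so that a saturating flow uses each super-edge once and yields precisely one path per source. One could instead invoke Theorem~\ref{Th:gbasisCondition} and argue that the ideal $J$ is proper, but certifying this rests on the same flow/rank fact, so I prefer the direct routing construction.
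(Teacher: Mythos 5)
Your proposal is correct, but it is not the route the paper takes: the paper offers no self-contained argument for this lemma and simply cites \cite[Theorems~3.1 and~3.2]{computing1}, which establish solvability under the cut condition for the two canonical targets (the identity function and the sum). What you have done is reprove those two cited results and add the easy reductions that the lemma's generality requires: for $l=s$ you reduce an arbitrary invertible $\transferM$ to the identity by absorbing $\transferM$ into the decoding coefficients (legitimate, since the decoder may form arbitrary $\field{q^n}$-linear combinations of the in-edge symbols and the paths land on distinct in-edges of $\receiver$), and for $l=1$ you absorb the nonzero coefficients $c_i$ into the source encoding coefficients $\encodCoeff{1,e}$. Both constructions are sound: the super-source cut accounting $\card{C}+\card{I_0}\ge\card{\sourceSet{C}}+(s-\card{\sourceSet{C}})=s$ correctly invokes the hypothesis $\card{C}\ge\card{\sourceSet{C}}$ (noting that when $I_0\neq\{1,\dots,s\}$ the residual edge set $C$ is indeed a cut in the paper's sense), Menger then yields one edge-disjoint path per source, and the in-tree aggregation for $l=1$ is verified by the disjointness of the predecessor sets at each node. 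The benefit of your version is that the lemma becomes self-contained and its slight extra generality over the cited statements is made explicit; the cost is only length, since the underlying combinatorial content (routing via max-flow/min-cut, aggregation along a confluent tree) is the same as in the external proofs the paper leans on.
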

The focus in the rest of the paper is to extend above results to the case
$l \notin \{1,s\}$ by using the algebraic test of Theorem~\ref{Th:gbasisCondition}.
%
\section{Computing linear functions} \label{Sec:linearFunctions}
In the following, we first define an equivalence relation among matrices and then use it to identify a set
of functions 
that are linearly solvable in every network satisfying the 
condition $\mbox{\cut{\Network,T}} \ge 1$. 
We then construct a linear function outside this set, and a corresponding network with $\mbox{\cut{\Network,T}} \ge 1$, on which such a function cannot be computed with linear codes. 
Finally, we use this example as a building block to identify a  set of linear functions for which there exist networks satisfying the min-cut condition and on which these functions are not solvable.

%

Notice that for a linear function with  matrix $T \in \field{q}^{l \times s}$,
 each column of $\transferM$ corresponds to a single source node.
Hence, for every $s \times s$ permutation matrix 
$\Pi$, computing 
$\transferM x$ is equivalent to computing $\transferM  \Pi x$
after appropriately renaming the source nodes.
Furthermore, for every $l \times l$ full rank matrix $Q$ over $\field{q}$,
computing $\transferM x$ is equivalent to computing $Q  \transferM x$.
These observations motivate the following definition:
\begin{definition}
Let $\transferM \in \field{2}^{l \times s}$ and $\transferM' \in \field{2}^{l \times s}$.
We say
$
\transferM \ \sim  \ \transferM'
$
if there exist an invertible matrix $Q$ of size $l \times l$ and a permutation matrix $\Pi$ of size $s \times s$ such that $\transferM = Q  \transferM'  \Pi$,
and 
$
\transferM \ \nsim \ \transferM'
$
if such $Q$ and $\Pi$ do not exist.
\end{definition}
Since $\transferM$ is assumed to be a full rank matrix,
$\Pi$ can be chosen such that 
the first $l$ columns of $\transferM  \Pi$ are linearly independent.
Let $\hat{\transferM}$ denote the first $l$ columns of $\transferM  \Pi$.
By choosing $Q = \hat{\transferM}^{-1}$, we have
$T \sim Q T  \Pi = (I \ P)$ where $P$ is an $l \times s-l$ 
matrix.
So for an arbitrary linear target function $f$ and an associated matrix $T$, 
there exists an $l \times s-l$ 
matrix $P$ such that $T \sim (I \ P)$.
Without loss of generality, 
we  assume that  each column of $\transferM$ associated with a target function is non-zero.

%
\begin{theorem} \label{Th:solvable}
Consider a network $\Network$  with a linear target function corresponding to a matrix $\transferM \in \field{q}^{(s-1) \times s}$ (i.e., $l = s-1$).
If %
$$
\transferM \sim (I \ u)
$$
where $u$ is a column vector of units,
then a  necessary and sufficient condition for the existence of a linear solution is
$\mbox{\cut{\Network,T}} \ge 1$.
\end{theorem}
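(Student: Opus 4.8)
The plan is to prove both directions of the equivalence, with the necessity being immediate and the sufficiency requiring a concrete construction of a linear code. Necessity is already handed to us: by Lemma~\ref{Lemma:ub}, the existence of any linear solution forces $\mbox{\cut{\Network,T}} \ge 1$, so nothing further is needed there. The entire content of the theorem is the sufficiency direction, i.e. showing that when $T \sim (I \ u)$ with $u$ a column of units and $\mbox{\cut{\Network,T}} \ge 1$, a linear solution does exist.

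First I would reduce to a canonical form. Since the hypothesis is stated up to the equivalence $\sim$, and since (as argued in the paragraph preceding the theorem) computing $Tx$ is equivalent to computing $QT\Pi x$ for invertible $Q$ and permutation $\Pi$, it suffices to construct a solution for the representative matrix $(I \ u)$ itself, where $I$ is the $(s-1)\times(s-1)$ identity and $u \in \field{q}^{s-1}$ has all entries nonzero (units). Concretely, the receiver must recover, for each of the first $s-1$ sources $\source_\tau$, the quantity $\sourceVec{\source_\tau} + u_\tau \sourceVec{\source_s}$, where $u_\tau$ is the $\tau$-th entry of $u$. The key observation is that this target function is, in a precise sense, as ``hard'' as demanding all $s$ source messages individually: knowing all $s-1$ of these combined quantities \emph{together with} $\sourceVec{\source_s}$ would give everything, but the receiver is not asked for $\sourceVec{\source_s}$ alone, so the cut structure it must satisfy is genuinely weaker.

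The heart of the argument is to exploit the min-cut hypothesis combined with the special structure of $u$ being all units. The main idea I would pursue is to show that the demand $(I\ u)$ can be met by a routing-plus-coding scheme analogous to the identity-matrix case of Lemma~\ref{Th:extremeMatrix}, but where source $\source_s$ is ``folded into'' each of the other source streams. Specifically, I would try to set up a flow argument: the condition $\mbox{\cut{\Network,T}} \ge 1$ means that for every cut $C$, $\card{C} \ge \rank(T_{K_C})$. Because $u$ consists of units, $\rank\big((I\ u)_{K_C}\big)$ equals the number of the first $s-1$ sources in $K_C$, except when $\source_s \in K_C$ and at least one other source is present, in which case the column $u$ adds no rank beyond what the identity columns already supply (its support overlaps all of them). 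I would carefully compute $\rank(T_{K_C})$ in each case and use this to show that enough edge-disjoint paths exist to route the required independent combinations; then I would invoke the algebraic test of Theorem~\ref{Th:gbasisCondition}, or more directly a Koetter--M\'edard style generic-assignment argument via Lemma~\ref{Lemma:Koetter}, to conclude that the transfer matrix $\big(M_1^t\ \cdots\ M_s^t\big)$ can be made equal to $(I\ u)$ over a suitable extension field $\field{q^n}$.

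The main obstacle I expect is precisely the rank bookkeeping for cuts containing $\source_s$: I must verify that the single extra column $u$ never inflates the required cut value beyond what the network's flow can deliver, and that the unit entries let the receiver algebraically disentangle $\sourceVec{\source_\tau} + u_\tau \sourceVec{\source_s}$ from the mixtures arriving on its in-edges. Establishing that the generating polynomials of the ideal $J$ (equivalently, the entries of $Z_\tau = (T_\tau)^t - M_\tau$) have a common zero—i.e. that $\gbasis{J} \neq \{1\}$—will likely reduce to showing a certain transfer determinant is not identically zero as a polynomial, which is where the min-cut condition and the unit hypothesis on $u$ must be combined. I would isolate this as the crux and prove it by exhibiting, for each cut, an explicit choice of edge-disjoint paths realizing the needed rank, thereby certifying the determinant is a nonzero polynomial and invoking the Schwartz--Zippel-type extension-field argument already packaged in Theorem~\ref{Th:gbasisCondition}.
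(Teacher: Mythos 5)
Your necessity direction, the reduction to the canonical representative $(I\ u)$, and the first half of the sufficiency argument all match the paper: the cut condition with $T=(I\ u)$ gives $\card{C}\ge\card{\sourceSet{C}}$ for every cut separating at most $s-1$ sources, hence a routing solution for the identity function on any $s-1$ of the sources, hence each $(s-1)\times(s-1)$ submatrix $M_{(j)}$ of the transfer matrix has nonvanishing determinant as a polynomial, and the sparse-zeros lemma yields an assignment over some $\field{q^n}$ making all of them simultaneously invertible. The gap is in how you propose to finish. You say the problem ``will likely reduce to showing a certain transfer determinant is not identically zero as a polynomial'' and then invoking the Schwartz--Zippel extension-field argument. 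It does not so reduce. Achieving $\gbasis{J}\neq\{1\}$ means exhibiting a common zero of the generating polynomials $(T_\tau)^{t}-M_\tau$, i.e.\ forcing the $(s-1)\times s$ transfer matrix to equal $(I\ u)$ exactly, up to the invertible post-processing absorbed into $B$. That is a \emph{closed} condition --- the row space of the transfer matrix must coincide with that of $(I\ u)$, equivalently its one-dimensional null space must be steered onto the specific line spanned by $(-u^{t},1)^{t}$ --- and no argument of the form ``some determinant is a nonzero polynomial, so a generic assignment works'' can certify membership in a proper subvariety. This is exactly where the counterexamples of Theorem~\ref{Th:insolvable} live: those networks also satisfy the cut condition and also admit codes with all maximal submatrices invertible, yet no linear solution exists.

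The paper closes this gap with a deterministic alignment step that your proposal lacks. After the generic assignment it proves (Claim 2) that every entry of $A_s(I-F)^{-1}B^{t}M_{(s)}^{-1}$ is nonzero (a consequence of all $s$ submatrices $M_{(i)}$ being full rank), and then explicitly rescales the code via $\bar{B} = D^{-1}U(M_{(s)}^{t})^{-1}B$ and $\bar{A}_i = u_i^{-1}\left(A_s(I-F)^{-1}\bar{B}^{t}\right)_i A_i$. Rescaling $A_i$ scales the $i$-th row of the transfer matrix, which is what moves its null space onto the required line; this is possible precisely because the null vector of the generic transfer matrix has no zero coordinate (Claim 2) and the target $u$ has no zero coordinate (the units hypothesis, needed to form $u_i^{-1}$ and the diagonal matrix $U$). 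Your remark that ``the unit entries let the receiver algebraically disentangle'' the mixtures gestures at this, but the receiver alone cannot do it --- the fix requires modifying the source encoding coefficients, not just the decoding matrix --- and without this construction (or some substitute for it) the sufficiency proof is incomplete.
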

\begin{proof} 
Let $T = (I \ u)$.
The `necessary' part is clear from Lemma~\ref{Lemma:ub}. We now focus on the `sufficiency' part.
Notice that for each $\tau=1,2,\ldots,s$, the matrix $M_{\tau}$ (computed as in \eqref{Eq:matrixMt}) 
is a row vector of length $s-1$.
Stack these $s$ row vectors to form an $s \times (s-1)$
matrix $M$ as follows,
$$
M = \begin{pmatrix} 
M_1 \\
M_2 \\
\vdots \\
M_s
\end{pmatrix}.
$$
Let $M_{(i)}$ denote the $(s-1)\times (s-1)$ submatrix of $M$
obtained by deleting its $i$-th row. \\
{\it Claim $1$}: The matrix
$$
\prod_{i=1}^{s} M_{(i)}
$$
has a non-zero determinant over the ring
$R = \polyRing{\field{q}}{x_A,x_F,x_B}$. \\
{\it Claim $2$}: For each $i=1,2,\ldots,s-1$, we have $\left(A_s (I-F)^{-1} B^{t} M_{(s)}^{-1}  \right)_i \neq 0$. \\
By Claim 1 and the sparse zeros lemma \cite{Koetter-Medard-IT-Oct03}, \cite{Schwartz}, it follows that 
 that there exists 
some $n > 0$ such that the variables $\edgeCoeffVar{e',e},\decodCoeffVar{e,l}$
can be assigned values over $\field{q^n}$ so that 
the $s \times (s-1)$ matrix 
$$
M = 
\begin{pmatrix} 
A_1 (I-F)^{-1} B^t \\
A_2 (I-F)^{-1} B^t \\
\vdots \\
A_s (I-F)^{-1} B^t
\end{pmatrix}
$$
is such that any of its $(s-1) \times (s-1)$ submatrices
$M_{(i)}, i=1,2,\ldots,s$ obtained by deleting the $i$-th row in $M$,
is full rank over $\field{q^n}$. 
Define two $s-1 \times s-1$ diagonal matrices  $U$ and $D$ such that
for $i \in \{1,2,\cdots,s-1\}$
\begin{align} 
U_{i,i} & = u_i  \nonumber \\
D_{i,i} & = \left(A_s (I-F)^{-1} B^{t} M_{(s)}^{-1} \right)_i. \label{Eq:diagonalMatrices}
\end{align}

Now define the following matrices over $\field{q^n}$:
\begin{align}
\bar{B} & = D^{-1} U (M_{(s)}^{t})^{-1} B \nonumber \\
\bar{A}_i & = u_i^{-1} \left(A_s (I-F)^{-1} \bar{B}^{t}\right)_i A_i \quad i=1,2,\ldots,s-1 \label{Eq:redefineA}\\
\bar{A}_s & = A_s. \nonumber
\end{align}
By by Claim 2 it follows that $D^{-1}$ exists.
If the matrices $\bar{A}_{\tau},F$, and $\bar{B}$ define a linear network code,
then by Lemma~\ref{Lemma:Koetter}, the vector received by $\receiver$  can be written as,
\begin{align} \label{Eq:receivedSum}
\bar{M}^{t} \ \  
\begin{pmatrix} 
\sourceVec{\source_1} \\
\sourceVec{\source_2} \\
\vdots \\
\sourceVec{\source_s}
\end{pmatrix}
\end{align}
where,
%
\begin{align}
\bar{M} & = 
\begin{pmatrix} 
\bar{A}_1 (I-F)^{-1} \bar{B}^{t} \\
\bar{A}_2 (I-F)^{-1} \bar{B}^{t} \\
\vdots \\
\bar{A}_s (I-F)^{-1} \bar{B}^{t}
\end{pmatrix}. \label{Eq:DefMstar}
\end{align}
We have
\begin{align}
\begin{pmatrix} 
A_1 (I-F)^{-1} \bar{B}^{t} \\
A_2 (I-F)^{-1} \bar{B}^{t} \\
\vdots \\
A_s (I-F)^{-1} \bar{B}^{t}
\end{pmatrix} & = 
\begin{pmatrix}
A_1 (I-F)^{-1} (D^{-1} U (M_{(s)}^{t})^{-1} B)^{t}  \\
A_2 (I-F)^{-1} (D^{-1} U (M_{(s)}^{t})^{-1} B)^{t} \\
\vdots \\
A_s (I-F)^{-1} (D^{-1} U (M_{(s)}^{t})^{-1} B)^{t}
\end{pmatrix} & \Comment{$\bar{B} = D^{-1} U (M_{(s)}^{t})^{-1} B$} \nonumber \\
& = 
\begin{pmatrix}
A_1 (I-F)^{-1} B^{t} M_{(s)}^{-1}  \\
A_2 (I-F)^{-1} B^{t} M_{(s)}^{-1} \\
\vdots \\
A_s (I-F)^{-1} B^{t} M_{(s)}^{-1}
\end{pmatrix} \ D^{-1} U   &  \Comment{$\left( (M_{(s)}^{t})^{-1}\right)^{t} = M_{(s)}^{-1}$} \nonumber \\
& = \begin{pmatrix}
		I \\
		A_s (I-F)^{-1} B^{t} M_{(s)}^{-1}
		\end{pmatrix} \ D^{-1} U  & \Comment{construction of $M_{(s)}$} \label{Eq:nulling} \\
\begin{pmatrix} 
\bar{A}_1 (I-F)^{-1} \bar{B}^{t} \\
\bar{A}_2 (I-F)^{-1} \bar{B}^{t} \\
\vdots \\
\bar{A}_s (I-F)^{-1} \bar{B}^{t}
\end{pmatrix} & = \begin{pmatrix}
		U^{-1} D \\
		A_s (I-F)^{-1} B^{t} M_{(s)}^{-1}
		\end{pmatrix} \ D^{-1} U   &  \Comment{\eqref{Eq:redefineA} and \eqref{Eq:nulling} } \nonumber \\
	  & = \begin{pmatrix}
		U^{-1}  \\
		\mb{1}^{t}
		\end{pmatrix} \ U & \Comment{\eqref{Eq:diagonalMatrices}}  \nonumber \\
	  & = \begin{pmatrix}
		I  \\
		\mb{1}^{t} U
		\end{pmatrix}  \nonumber \\
 		& = \begin{pmatrix}
		I \\
		u^{t}
		\end{pmatrix} \label{Eq:SolMstar} \\ 
\bar{M}^{t} & =   \begin{pmatrix} 
 I \ u \end{pmatrix}. & \Comment{\eqref{Eq:DefMstar} and \eqref{Eq:SolMstar}} \label{Eq:solMstar1}
\end{align}
By substituting \eqref{Eq:solMstar1} in \eqref{Eq:receivedSum},
we conclude that the receiver
computes  the desired linear function by employing 
the network code defined by the encoding matrices
$\{\bar{A}_i,i=1,2,\ldots,s\}$, $\bar{B}$, and $F$. 

The proof of the theorem is now complete for the case when $T = (I \ u)$.
If $\transferM \sim (I \ u)$, then there exists a full-rank matrix $Q$
and a column vector $u'$ of non-zero elements over $\field{q}$ such that
\begin{align*}
T & = Q \ (I \ u'). & \Comment{From Lemma~\ref{Lemma:equivalenceLinearFunctions} in the Appendix}
\end{align*}
Since a full-rank linear operator preserves linear-independence
among vectors, 
for every such full-rank matrix $Q$, we have
\begin{align} \label{Eq:rankEq}
 \rank{\transferM_{\sourceSet{C}}} & =  \rank{(Q^{-1}\transferM)_{\sourceSet{C}}} \quad \forall \ C  \in \cuts{\Network}.
\end{align}
Equation \eqref{Eq:rankEq} implies that $ \mbox{\cut{\Network,T}}= \mbox{\cut{\Network,Q^{-1}T}}$.
Since $Q^{-1}T = (I \ u')$,
from the first part of the proof, there exist an $n > 0$ 
and coding matrices $A_{\tau}, \tau=1,2,\cdots,s$, $F$, and $B$ over $\field{q^n}$ 
such that the receiver can compute the linear target function corresponding to 
$(I \ u')$ if and only if $\mbox{\cut{\Network,T}} \ge 1$.
It immediately follows that by utilizing a code corresponding to the coding 
matrices $A_{\tau}, \tau=1,2,\cdots,s$, $F$, and $QB$, the receiver can compute the target 
function corresponding to $Q (I \ u') = T$.

All that remains to be done is to provide proofs of claims $1$ and $2$.\\
{\it Proof of Claim 1}: 
%
%
If a cut $C$ is such that $\card{\sourceSet{C}} \le s-1$,
then 
\begin{align*}
\card{C} & \ge  \rank{\transferM_{\sourceSet{C}} } & \Comment{$\mbox{\cut{\Network,T}} \ge 1$ and \eqref{Eq:mincut}} \\
		 & = \card{\sourceSet{C}}. & \Comment{$T = (I \ u)	$}
\end{align*}
Thus by \cite[Theorem~3.1]{computing1}, 
there exists a routing solution to compute the identity function of the sources $\{\source_i,i \in \sourceSet{C}\}$
at the receiver.
Let $\card{\sourceSet{C}} = s-1$ and let $\sourceSet{C} = \{1,2,\ldots,j-1,j+1,\ldots,s\}$
for some (arbitrary) $j$.
By Lemma~\ref{Lemma:Koetter}, 
after fixing $\sourceVec{\source_j}=0$,
the vector received by $\receiver$ can be written as
$$
M_{(j)}^{t} \ \ 
\begin{pmatrix} 
\sourceVec{\source_1} \\
\sourceVec{\source_2} \\
\vdots \\
\sourceVec{\source_{j-1}} \\
\sourceVec{\source_{j+1}} \\
\vdots \\
\sourceVec{\source_s} 
\end{pmatrix}.
$$
The existence of a routing solution for computing the identity 
function guarantees that there 
exist 
$\edgeCoeffVar{e',e},\decodCoeffVar{e,l} \in \{0,1\}$ such that the 
matrix $M_{(j)}$ has a non-zero determinant over $\field{q}$.
It follows that the
determinant of $M_{(j)}$ is non-zero over $\polyRing{\field{q}}{x_A,x_F,x_B}$.
Since $j \in \{1,2,\ldots,s\}$ was arbitrary in the above argument, 
it follows that the determinant of each  $M_{(j)}, j=1,2,\ldots,s$ is non-zero over 
$\polyRing{\field{q}}{x_A,x_F,x_B}$ and the claim follows. \\
{\it Proof of Claim 2}: We have
\begin{align}
M \ M_{(s)}^{-1} & = \begin{pmatrix} 
A_1 (I-F)^{-1} B^{t} \\
A_2 (I-F)^{-1} B^{t} \\
\vdots \\
A_s (I-F)^{-1} B^{t}
\end{pmatrix} \quad M_{(s)}^{-1} \nonumber \\
& \overset{(a)} {=} 
\begin{pmatrix} 
I \\ 
A_s(I-F)^{-1} B^{t} M_{(s)}^{-1}
\end{pmatrix} 
 \label{Eq:standardize}
\end{align}
where, $(a)$ follows from the definition of $M_{(s)}^{-1}$.
By contraction, assume that there exists an  $i\in \{1,2,\ldots,s-1\}$ such that 
 $\left(A_s(I-F)^{-1}\bar{B}^{t}\right)_i = 0$.
It then follows that 
\begin{align}
 A_s(I-F)^{-1}B^{t} M_{(s)}^{-1} & =  
 \sum_{j=1}^{s-2} \left(A_s(I-F)^{-1}B^{t} M_{(s)}^{-1}\right)_{i_j} (A_{i_j}(I-F)^{-1}B^{t} M_{(s)}^{-1}) & \Comment{\eqref{Eq:standardize}} \label{Eq:dependance}
\end{align}
for some choice of $i_j \in \{1,2,\ldots,s-1\}, j=1,2,\ldots,s-2$ and
\begin{align}
 \Big( A_s(I-F)^{-1}B^{t} - \sum_{j=1}^{s-2} \left(A_s(I-F)^{-1}B^{t} M_{(s)}^{-1}\right)_{i_j} 
 (A_{i_j}(I-F)^{-1}B)^{t} \Big) \; M_{(s)}^{-1} & = 0 & \Comment{\eqref{Eq:dependance}} \nonumber \\
 \Big( A_s(I-F)^{-1}B^{t} - \sum_{j=1}^{s-2} \left(A_s(I-F)^{-1}B^{t} M_{(s)}^{-1}\right)_{i_j} 
 (A_{i_j}(I-F)^{-1}B)^{t} \Big) & = 0. & \Comment{ $ M_{(s)}^{-1} $ is full rank}  \label{Eq:depend}
\end{align}
Equation \eqref{Eq:depend} 
implies a linear dependence among $s-1$ rows of the matrix $M$.
This contradicts the fact that for each $i=1,2,\ldots,s$,
$M_{(i)}$ is full
rank.
Thus $\left(A_s(I-F)^{-1}B^{t} M_{(s)}^{-1}\right)_i \neq 0$ for $i=1,2,\ldots,s-1$ and the claim follows.

\end{proof}
\begin{remark}
We provide the following  communication-theoretic interpretation of our method of proof above. 
We  may view the computation problem as a MIMO (multiple input multiple output) channel where
the multiple input is given by the vector of symbols generated by the sources, the output is the vector decoded by the receiver, and the channel is given by the network topology and the network code.
Our objective is to choose a channel 
to guarantee the desired output, by way of code design subject to the constraints imposed by network topology.
The channel gain from source $\source_i$ to the receiver is given by the vector  $M_i$ of length $s-1$.
The first part of the proof utilizes the sparse zeros lemma to establish that there exists a choice of 
channels such that the channel between every set of  $s-1$ sources and 
the receiver is invertible. This is similar to the proof of the multicast theorem in \cite{Koetter-Medard-IT-Oct03}.
In the second part of the proof, we recognize that the interference from different sources must also be ``aligned'' at the output for the
receiver to be able to compute the desired function. Accordingly, we have modified the code construction to provide such alignment.
\end{remark}
We now show the existence of a linear function that cannot be computed on a network satisfying the min-cut condition. This network will then be used as a building block to show an analogous result for a larger class of functions.
Let $\transferM_1$ denote the matrix
\begin{align} \label{Eq:T1}
\begin{pmatrix}
1 & 0 & 1 \\
0 & 1 & 0 
\end{pmatrix}
\end{align}
and let $f_1$ denote the corresponding linear function. It is possible to show with some algebra that $T_1 \not \sim (I \; u),$ for any column vector $u$ of units, so that the conclusion of Theorem~\ref{Th:solvable} does not hold. 
Indeed, for the function $f_1$ the opposite conclusion is true, namely $f_1$ cannot be computed over $\Network_1$ using linear codes. This is shown by  the following Lemma. 
%
\begin{lemma} \label{Lemma:gBasis1}
Let $\Network_{\GBASISFIG}$ be the network 
shown in Figure~\ref{Fig:gBasis1} with alphabet $\field{q}$.
We have
\begin{enumerate}
\item $\mbox{\cut{\Network_{\GBASISFIG},T_1}} = 1$.
\item There does not exist a  linear solution for computing $f_1$ in $\Network_1$.
\end{enumerate}
\end{lemma}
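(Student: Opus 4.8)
The plan is to establish the two assertions separately: the first by a direct enumeration of cuts, and the second by exhibiting the algebraic obstruction predicted by Theorem~\ref{Th:gbasisCondition}.

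For the first assertion, recall that $\mbox{\cut{\Network_{\GBASISFIG},T_1}}$ is the minimum of the ratios $\card{C}/\rank{(T_1)_{\sourceSet{C}}}$ over all cuts $C \in \cuts{\Network_{\GBASISFIG}}$. I would first obtain the upper bound $\mbox{\cut{\Network_{\GBASISFIG},T_1}} \le 1$ by exhibiting a single cut attaining ratio $1$: taking $C$ to be a lone edge whose removal disconnects exactly one source from $\receiver$ gives $\card{C}=1$ and $\rank{(T_1)_{\sourceSet{C}}}=1$. For the matching lower bound $\mbox{\cut{\Network_{\GBASISFIG},T_1}} \ge 1$, I would verify that every cut $C$ satisfies $\card{C} \ge \rank{(T_1)_{\sourceSet{C}}}$. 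Since every column of $T_1$ is nonzero and $\rank{(T_1)_{\sourceSet{C}}} \le 2$, the only cuts that could violate the bound are those isolating a source subset $\sourceSet{C}$ of rank $2$, namely a subset containing $\source_2$ together with $\source_1$ or $\source_3$, or all three sources (observe that $\rank{(T_1)_{\{1,3\}}}=1$ because the first and third columns of $T_1$ coincide). For each such subset I would check, using the topology of Figure~\ref{Fig:gBasis1} and edge-connectivity (Menger), that at least two edges are required to separate it from $\receiver$, so that $\card{C}\ge 2 = \rank{(T_1)_{\sourceSet{C}}}$. Combining the two bounds yields equality.

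For the second assertion, the conceptual point is an interference-alignment obstruction arising from the coincidence of the first and third columns of $T_1$: sources $\source_1$ and $\source_3$ are demanded only through the single function $\sourceVec{\source_1}+\sourceVec{\source_3}$, while $\source_2$ is demanded alone. The key structural fact I would extract from Figure~\ref{Fig:gBasis1} is that each symbol reaching $\receiver$ can depend only on the pair $\{\sourceVec{\source_1},\sourceVec{\source_2}\}$ or only on the pair $\{\sourceVec{\source_2},\sourceVec{\source_3}\}$, so every received symbol has the form $\alpha\,\sourceVec{\source_1}+\beta\,\sourceVec{\source_2}$ or $\gamma\,\sourceVec{\source_2}+\delta\,\sourceVec{\source_3}$. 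I would then argue that recovering $\sourceVec{\source_1}+\sourceVec{\source_3}$ forces $\alpha,\delta\neq 0$, after which the only linear combination of the received symbols that annihilates both $\sourceVec{\source_1}$ and $\sourceVec{\source_3}$ is the trivial one; hence $\sourceVec{\source_2}$ cannot be isolated simultaneously, and no decoding map recovers $f_1$. Equivalently, the space of functions computable at $\receiver$ is a two-dimensional space of vectors that cannot contain both $(1,0,1)$ and $(0,1,0)$, which are the two rows demanded by $T_1$.

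In keeping with the algebraic framework, I would also realize this same obstruction through Theorem~\ref{Th:gbasisCondition}: assign indeterminate coding coefficients to the edges of $\Network_{\GBASISFIG}$, compute the row vectors $M_1,M_2,M_3$ from \eqref{Eq:matrixMt}, form the generating polynomials $Z_\tau=(T_\tau)^t-M_\tau$ of the ideal $J$, and show that $\gbasis{J}=\{1\}$, i.e. $1\in J$, so that by Theorem~\ref{Th:gbasisCondition} no linear solution exists over any $\field{q^n}$. The expected main obstacle is precisely producing a \emph{human-verifiable} certificate that $1\in J$, rather than relying on a black-box Gr\"obner computation. I anticipate that eliminating the decoding variables from the equations $M_1=(1,0)$, $M_2=(0,1)$, and $M_3=(1,0)$ yields a contradictory pair of scalar relations, one forcing a product of bottleneck coefficients to vanish and another forcing it to be a unit; that contradiction is the polynomial identity witnessing $1\in J$. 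Either route establishes that $f_1$ admits no linear solution in $\Network_{\GBASISFIG}$.
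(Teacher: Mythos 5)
Your proposal is correct, and for the impossibility assertion it takes a genuinely different route from the paper. On the cut computation the two essentially agree: the paper just exhibits the cut $\{e_3,e_4\}$ (all three sources separated, $\card{C}=2=\rank{(T_1)_{\{1,2,3\}}}$) and calls the lower bound easily verified, while you spell out the enumeration; both are fine. For the impossibility, the paper commits entirely to the algebraic test of Theorem~\ref{Th:gbasisCondition}: after normalizing $x_{1,e_1}=x_{1,e_2}=1$ it writes the generators $Z_1,Z_2,Z_3$ explicitly and exhibits a five-term combination of them equal to $1$, namely $1=(1-x_{e_1,e_3}x_{e_3,2}-x_{e_2,e_4}x_{e_4,2})+x_{e_1,e_3}x_{e_3,2}(1-x_{1,e_3}x_{e_3,1})-x_{e_1,e_3}x_{e_3,1}(-x_{1,e_3}x_{e_3,2})+x_{e_2,e_4}x_{e_4,2}(1-x_{1,e_4}x_{e_4,1})-x_{e_2,e_4}x_{e_4,1}(-x_{1,e_4}x_{e_4,2})$, so that $\gbasis{J}=\{1\}$. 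Your primary argument instead reads off that the receiver sees only $z_{e_3}=\alpha\,\sourceVec{\source_1}+\beta\,\sourceVec{\source_2}$ and $z_{e_4}=\gamma\,\sourceVec{\source_2}+\delta\,\sourceVec{\source_3}$, that recovering $\sourceVec{\source_1}+\sourceVec{\source_3}$ forces $\alpha,\delta\neq 0$, and that then no nontrivial decoding combination annihilates both $\sourceVec{\source_1}$ and $\sourceVec{\source_3}$, so $\sourceVec{\source_2}$ cannot be isolated. This is complete, elementary, and works uniformly over every $\field{q^n}$ without invoking the Nullstellensatz; what it forgoes is only the demonstration of the Gr\"obner-basis test that the paper is showcasing. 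Your secondary, algebraic route is the paper's route, but you leave the certificate that $1\in J$ as an anticipated computation; if you pursue it you should produce the identity above (or, equivalently, your ``contradictory pair of relations'': $M_1=(1,0)$ and $M_3=(1,0)$ force $x_{e_3,2}=x_{e_4,2}=0$, whereupon the second coordinate of $M_2$, which is $x_{e_1,e_3}x_{e_3,2}+x_{e_2,e_4}x_{e_4,2}$, cannot equal $1$). Either version of your argument establishes the lemma.
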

\begin{figure}[hht]
\begin{center}
\psfrag{X}{$\source_2$}
\psfrag{Y}{$\source_1$}
\psfrag{Z}{$\source_3$}
\psfrag{T}{$\receiver$}
\psfrag{e1}{$e_1$}
\psfrag{e2}{$e_2$}
\psfrag{e3}{$e_3$}
\psfrag{e4}{$e_4$}
\scalebox{.8}{\includegraphics{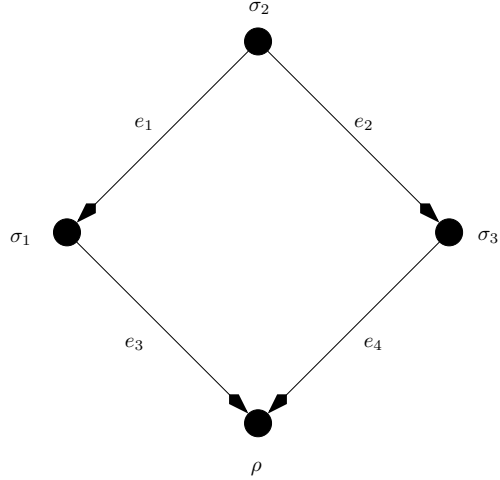}}
\end{center}
\caption{Network  on which there is no linear solution for computing $f_1$.} 
\label{Fig:gBasis1}
\end{figure}
\begin{proof}
That $\mbox{\cut{\Network_{\GBASISFIG},T_1}} = 1$ is easily verified by
considering the cut $C = \{e_3,e_4\}$ which attains the minimum.
We now proceed to show, using Theorem~\ref{Th:gbasisCondition}, that a linear solution does not exist.


We may assume, without loss of generality, that the node $\source_2$ sends its message directly to nodes
$\source_1$ and $\source_3$ (i.e., $x_{1,e_1}=x_{1,e_2}=1$).
The matrices $Z_1,Z_2$, and $Z_3$ over $R$ can then be written as 
\begin{align*}
(T_1)^{t}-M_1 &  = \begin{pmatrix}
(1- x_{1,e_3} x_{e_3,1}) & (0-x_{1,e_3} x_{e_3,2})
\end{pmatrix} \\
(T_2)^{t}-M_2 & = \begin{pmatrix}
0-x_{e_1,e_3} x_{e_3,1}-x_{e_2,e_4} x_{e_4,1} \\ 1-x_{e_1,e_3} x_{e_3,2}- x_{e_2,e_4} x_{e_4,2}
\end{pmatrix}^{t} \\
(T_3)^{t}-M_3 & = \begin{pmatrix}
(1-x_{1,e_4} x_{e_4,1}) & (0-x_{1,e_4} x_{e_4,2})
\end{pmatrix}.
\end{align*}
Consequently, the ideal $J$ is given by
\begin{align*}
J & = \big\langle (1- x_{1,e_3} x_{e_3,1}), \ (0-x_{1,e_3} x_{e_3,2}), \\
  & \quad  \ (0-x_{e_1,e_3} x_{e_3,1}-x_{e_2,e_4} x_{e_4,1}), \\
  &  \quad \ (1-x_{e_1,e_3} x_{e_3,2}- x_{e_2,e_4} x_{e_4,2}), \\
  & \quad \ (1-x_{1,e_4} x_{e_4,1}), \ (0-x_{1,e_4} x_{e_4,2}) 
   \big\rangle.
\end{align*}
%
%
We have 
\begin{align*}
1 & = (1-x_{e_1,e_3}x_{e_3,2} - x_{e_2,e_4}x_{e_4,2}) \\
  & \quad + x_{e_1,e_3} x_{e_3,2} (1- x_{1,e_3} x_{e_3,1}) \\
  & \quad - x_{e_1,e_3}x_{e_3,1} (0- x_{1,e_3} x_{e_3,2} ) \\
  & \quad + x_{e_2,e_4}x_{e_4,2} (1-x_{1,e_4}x_{e_4,1}) \\
  & \quad - x_{e_2,e_4}x_{e_4,1} (0- x_{1,e_4}x_{e_4,2}) \ \in \ J.
\end{align*}
Thus, it follows that $\gbasis{J} = \{1\}$.
By Theorem~\ref{Th:gbasisCondition},
a linear solution does not exist for computing $f_1$ in $\Network_{\GBASISFIG}$.
\end{proof}
We now identify a much larger class of linear functions for  which there exist networks satisfying the min-cut condition but for which linear solutions do not exist.
Let $P$ be an $l \times s-l$ matrix with at least one zero element and
$T \sim (I \ P)$. For each $T$ in this equivalence class we show that there exist a network $\Network$ that does not have a solution
for computing the linear target function corresponding to $T$ but satisfies the cut condition in Lemma~\ref{Lemma:ub}.
The main idea of the proof is to establish that a solution for computing such a function in network $\Network$ implies 
a solution for computing the function corresponding to $T_1$ in $\Network_{\GBASISFIG}$, and then to use 
Lemma~\ref{Lemma:gBasis1}.
\begin{theorem} \label{Th:insolvable}
Consider a linear target function $f$
corresponding to a matrix $\transferM \in \field{q}^{l \times s}$.
If $\transferM \sim (I \ P)$ such that at least one element of $P$ is zero,
then there exists a network $\Network$ such that
\begin{enumerate}
\item $\mbox{\cut{\Network,T}} = 1$. 
\item There does not exist a linear solution  for computing $f$ in $\Network$.
\end{enumerate}
\end{theorem}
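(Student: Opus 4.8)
The plan is to reduce the general case to the single insolvable instance $(\Network_1,T_1)$ of Lemma~\ref{Lemma:gBasis1}. First, exactly as in the proof of Theorem~\ref{Th:solvable}, both linear solvability and the quantity $\mbox{\cut{\Network,T}}$ are invariant under the equivalence $\sim$ (the min-cut invariance is equation~\eqref{Eq:rankEq}, applied with the matrix $Q$ supplied by Lemma~\ref{Lemma:equivalenceLinearFunctions}), so it suffices to construct a network for the representative $\transferM=(I\ P)$ itself. I then fix an entry $P_{a,b}=0$; since $\transferM$ has no zero columns, the $b$-th column $P_{\cdot,b}$ of $P$ has some nonzero entry, say $P_{a',b}\neq 0$ with $a'\neq a$. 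I single out the three \emph{core} sources $\source_a,\source_{a'},\source_{l+b}$, whose columns of $\transferM$ are $e_a$, $e_{a'}$ and $P_{\cdot,b}$. Restricted to the two output coordinates $a$ and $a'$ these columns read $(1,0)$, $(0,1)$ and $(0,P_{a',b})$: the second and third are parallel and the first is independent, which is precisely the column pattern of $T_1$. This fixes the role assignment $\source_a\mapsto\source_2$, $\source_{a'}\mapsto\source_1$, $\source_{l+b}\mapsto\source_3$ against the labelling of Figure~\ref{Fig:gBasis1}.

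Next I would build $\Network$ by embedding $\Network_1$ on these three core sources: $\source_a$ (in the role of $\source_2$) broadcasts to $\source_{a'}$ and $\source_{l+b}$, each of which reaches $\receiver$ through a single edge, so that the three core sources meet $\receiver$ only across a two-edge cut playing the role of $\{e_3,e_4\}$. Every remaining source is attached to $\receiver$ through its own bundle of direct edges, each bundle wide enough to carry that source's share of the rank demanded by any cut. The intent is that this two-edge core cut is the binding cut, attaining $\card{C}/\rank{\transferM_{\sourceSet{C}}}=1$ and giving part~(1), while every other cut has ratio at least $1$.

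For part~(2) I argue by contradiction. Suppose a linear solution for computing $f$ in $\Network$ exists. Zeroing the message vectors of all non-core sources and retaining only the two decoder coordinates indexed by $a$ and $a'$, the values delivered to $\receiver$ across the core cut are linear combinations of $\source_a,\source_{a'},\source_{l+b}$ alone, and the two retained coordinates must reproduce $\source_a$ and $\source_{a'}+P_{a',b}\source_{l+b}$. After absorbing the unit $P_{a',b}$ into the coding coefficients, this is exactly a linear solution for computing $f_1$ in $\Network_1$, contradicting Lemma~\ref{Lemma:gBasis1} (equivalently, forcing $\gbasis{J}=\{1\}$ in Theorem~\ref{Th:gbasisCondition}).

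I expect the delicate step to be the cut accounting of the second paragraph, and in particular its compatibility with the embedding. If $P_{\cdot,b}$ has nonzero entries in rows other than $a$ and $a'$, then the three core columns have rank $3$ rather than $2$, so a naive two-edge core cut gives ratio $2/3<1$ and violates Lemma~\ref{Lemma:ub}; conversely, widening that cut to restore the ratio would hand $\receiver$ a way to bypass the two-edge bottleneck and break the reduction of the third paragraph. The crux is therefore to wire the remaining sources and edges — or to pre-select the representative $(I\ P)$ together with the indices $a,a',b$ — so that the core sources reach $\receiver$ only through a rank-matching cut of the correct size, the binding ratio is exactly $1$, and the restriction-and-projection still lands on $(\Network_1,T_1)$.
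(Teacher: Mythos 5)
Your strategy --- reduce to the representative $(I\ P)$, isolate a $2\times 3$ submatrix equivalent to $T_1$, and embed $\Network_1$ so that a solution for $f$ would induce a solution for $f_1$ --- is exactly the paper's, and you have also correctly diagnosed the one real difficulty: if the chosen column $P_{\cdot,b}$ has nonzero entries outside row $a'$, then the three core columns are linearly independent and a two-edge bottleneck separating the three core sources violates Lemma~\ref{Lemma:ub}. But you stop there, declaring the resolution to be ``the crux'' without supplying it; as written, your argument only covers the case in which $P_{\cdot,b}$ has a single nonzero entry. That is a genuine gap, and it is precisely the part of the argument that carries the weight in the paper's proof.

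The paper closes it as follows. Let $\tau$ index the zero-containing column and let $K\subseteq\{1,\dots,l\}$ be the \emph{entire} support of that column (your $a'$ is one element $j_1$ of $K$); let $p$ be a row where that column is zero. The bottleneck is widened to $\card{K}+1$ edges: $\source_\tau$ broadcasts to every $\source_j$, $j\in K$, each of which has a single edge to $\receiver$, while $\source_p$ and $\source_\tau$ reach $\receiver$ only through a relay. The cut separating $K\cup\{p,\tau\}$ then has ratio exactly $1$, because column $\tau$ of $(I\ P)$ lies in the span of the identity columns indexed by $K$, so $\rank{\transferM_{K\cup\{p,\tau\}}}\le\card{K}+1$ (equation~\eqref{Eq:linearRelation}). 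For the impossibility, the reduction to $(\Network_1,T_1)$ exploits the fact that a solution must work for \emph{every} message generator: the edge $(\source_j,\receiver)$ necessarily carries $\beta_{1,j}\sourceVec{\source_j}+\beta_{2,j}\sourceVec{\source_\tau}$ with $\beta_{1,j}\neq 0$, so setting $\sourceVec{\source_j}=-(\beta_{1,j})^{-1}\beta_{2,j}\,\sourceVec{\source_\tau}$ for $j\in K\setminus\{j_1\}$ forces all the extra bottleneck edges to carry zero, and the surviving coordinates $j_1$ and $p$ of the target reduce to $\left(\sourceVec{\source_{j_1}}+\sourceVec{\source_\tau},\ \sourceVec{\source_p}\right)$, i.e.\ to $f_1$ on a copy of $\Network_1$. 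Without this enlargement-plus-cancellation device your restriction-and-projection does not land on $(\Network_1,T_1)$, and the theorem is not proved in the generality claimed. (Your handling of the non-core sources is also left vague --- the paper attaches each of them by a single direct edge and verifies the cut condition by splitting off their contribution $c_{\bar K}$ --- but that part is routine once the core construction is fixed.)
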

\begin{proof}
%
Let $\hat{T} = (I \ P)$ and let $\hat{f}$ denote the corresponding linear target function.
It is enough to show that there exists a network $\Network_P$ 
such that $\mbox{\minCut{\Network_P,f}} = 1$ 
but $\Network_P$ does not have a 
linear solution  for computing $\hat{f}$.
This is because a network $\Network$ that does not
have a solution for computing $T$ is 
easily obtained by renaming the sources in $\Network_P$ as follows:
Since $T \sim (I \ P)$, there exist $Q$ and $\Pi$ such that $T = Q (I \ P) \Pi$.
Let $\kappa$ denote the permutation function on the set $\{1,2,\ldots,s\}$ defined by the permutation matrix
$\Pi^{-1}$.
Obtain the network $\Network$ by relabeling source $\source_i$ in $\Network_P$ as $\source_{\kappa(i)}$.
To see that there does not exist a solution for computing $f$ in $\Network$,
assume to the contrary that a solution exists.
By using the same network code in $\Network_P$, the receiver computes
$$
Q (I \ P) \Pi \ (x_{\kappa(1)},x_{\kappa(2)},\ldots,x_{\kappa(s)})^{t} = Q (I \ P) \ (x_1,x_2,\ldots,x_s)^{t}.
$$
Thus the receiver in $\Network_P$ can compute $\hat{T} x^t$, which is a contradiction.

Now we construct the network $\Network_P$ as claimed.
Since $P$ has at least once zero element, 
there exists a $\tau \in \{l+1,l+2,\ldots,s\}$ 
such that $\hat{\transferM}$ has a zero in $\tau$-th column.
Define
$$
K = \left\{ i \in \{1,2,\ldots,l\}:\hat{T}_{i,\tau} = 1\right\}
$$
Denote the elements of $K$ by 
$$\left\{j_1,j_2,\ldots,j_{\card{K}}\right\}.$$
Let $p$ be an element of $\{1,2,\ldots,l\}-K$ (such a $p$
exists from the fact that the $\tau$-th column contains at least one zero)
and define
$$
\bar{K} = \left\{1,2,\ldots,s\right\} - K - \{\tau,p\}
$$
and denote the elements of $\bar{K}$ by 
$$\left\{j_{\card{K}+1},j_{\card{K}+2},\ldots,j_{s-\card{K}-2}\right\}.$$
Since $\hat{T}$ does not contain an all-zero column, $\card{K} > 0$.
Now, let $\Network_P$ denote the network shown in Figure~\ref{Fig:insolvability} where, 
$\node$ denotes a relay node.
\begin{figure*}[ht]
\begin{center}
\psfrag{sp}{{\large $\source_{p}$}}
\psfrag{st}{{\large $\source_{\tau}$}}
\psfrag{s2}{{\large $\source_{j_1}$}}
\psfrag{s3}{{\large $\source_{j_2}$}}
\psfrag{s4}{{\large $\source_{j_{\card{K}-2}}$}}
\psfrag{s5}{{\large $\source_{j_{\card{K}-1}}$}}
\psfrag{s6}{{\large $\source_{j_{\card{K}}}$}}
\psfrag{s7}{{\large $\source_{j_{\card{K}+1}}$}}
\psfrag{s8}{{\large $\source_{j_{\card{K}+2}}$}}
\psfrag{s9}{{\large $\source_{j_{s-\card{K}-3}}$}}
\psfrag{s10}{{\large $\source_{j_{s-\card{K}-2}}$}}
\psfrag{n}{{\large $\node$}}
\psfrag{r}{{\large $\receiver$}}
\scalebox{.7}{\includegraphics{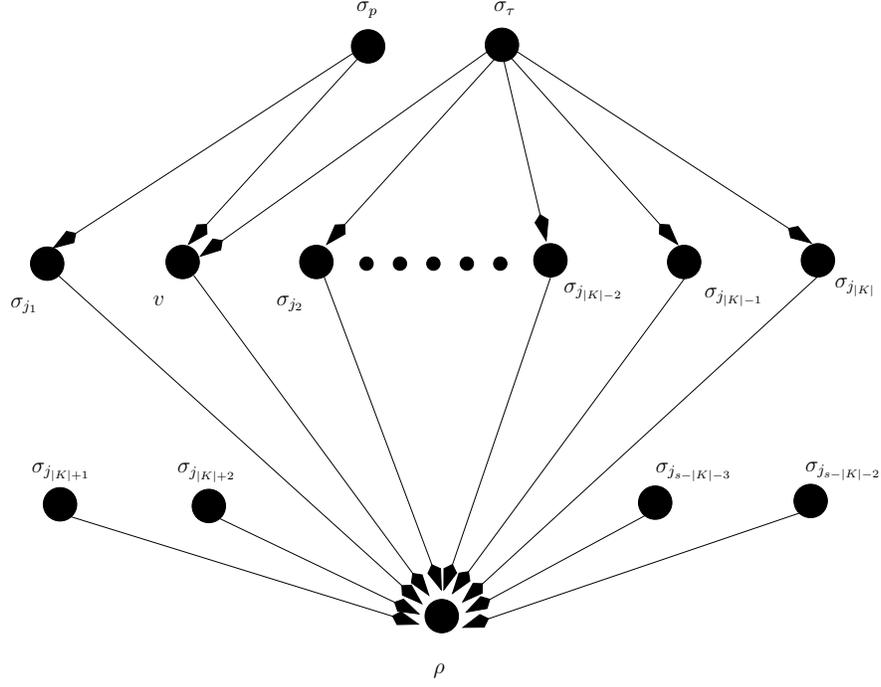}}
\end{center}
\caption{Network $\Network_{P}$ with min-cut  $1$ that does not
have an $\field{q}$-linear solution  for computing $(I \ P)$.} 
\label{Fig:insolvability}
\end{figure*}
It follows from the construction that 
\begin{align} \label{Eq:reducedT}
\begin{pmatrix}
\hat{T}_{j_1,j_1} & \hat{T}_{j_1,p} & \hat{T}_{j_1,\tau} \\
\hat{T}_{p,j_1} & \hat{T}_{p,p} & \hat{T}_{p,\tau} \\
\end{pmatrix} = 
\begin{pmatrix} 
1 & 0 & 1 \\
0 & 1 & 0
\end{pmatrix}
\end{align}
which is equal to the transfer matrix $\transferM_1$ defined in
\eqref{Eq:T1}.

Notice that in the special case when $K=\{j_1\}$ and $\card{\bar{K}}=0$,
the network shown in Figure~\ref{Fig:insolvability} 
reduces to the network shown in Figure~\ref{Fig:reducedSum2}
which is equivalent
to the network $\Network_{\GBASISFIG}$ in Figure~\ref{Fig:gBasis1} with target
function $f_1$. 
Since $\Network_{\GBASISFIG}$ does not have a solution 
for computing $f_1$ by Lemma~\ref{Lemma:gBasis1}, 
we conclude that $\Network_{\GBASISFIG}$ cannot have a solution either.

Similarly, we now show that in the general case, 
if the network $\Network_{P}$ 
has a solution for computing $\hat{f}$,
then such a solution induces a solution for computing $f_1$ in network $\Network_{\GBASISFIG}$, contradicting
Lemma~\ref{Lemma:gBasis1}.
Let there exist  an $n > 0$ for which there is a  linear solution for computing $\hat{f}$ over
 $\Network_P$ using an alphabet over $\field{q^n}$.
In any such solution,
for each $j \in K-\{j_1\}$, 
the encoding function on the edge 
$(\source_j,\receiver)$ 
must be of the form 
\begin{align} \label{Eq:edgeForm}
\beta_{1,j} \sourceVec{\source_{j}} + \beta_{2,j} \sourceVec{\source_{\tau}}
\end{align}
for some $\beta_{1,j},\beta_{2,j} \in \field{q^n}$.
Since $(\source_j,\receiver)$ is the only path from 
source $\source_j$ to the receiver, it is obvious that 
$\beta_{1,j} \neq 0$.

We define the map $\alpha$ as follows.
Let $\sourceVec{\source_{j_1}},\sourceVec{\source_{p}},\sourceVec{\source_{\tau}}$ be arbitrary elements of $\field{q^n}$ and 
let
\begin{align} \label{Eq:souceChoice}
\sourceVec{\source_j} = \begin{cases}
0 \; & \mbox{for} \; j \in \bar{K} \\
-(\beta_{1,j})^{-1}\beta_{2,j} \sourceVec{\source_{\tau}}
 \;  & \mbox{for} \; j \in K-\{j_1\}.
\end{cases} 
\end{align}
Note that $\alpha$ has been chosen such that 
for any choice of $\sourceVec{\source_{j_1}},\sourceVec{\source_{p}}$, and $\sourceVec{\source_{\tau}}$, every edge 
$e \in \inEdges{\receiver} - \{(\source_{i_1},\receiver),(\node,\receiver)\}$ carries the zero vector.
Furthermore, for the above choice of $\alpha$,
the target function associated with $\hat{T}$ reduces to
\begin{align} \label{Eq:receiverComputes}
\left( \sourceVec{\source_{1}}+ \hat{T}_{1,\tau}\sourceVec{\source_{\tau}},\sourceVec{\source_{2}}+ \hat{T}_{2,\tau}\sourceVec{\source_{\tau}},\ldots,\sourceVec{\source_{l}}+ \hat{T}_{l,\tau}\sourceVec{\source_{\tau}}\right).
\end{align}
Substituting $\hat{T}_{j_1,\tau} =1$ and 
$\hat{T}_{p,\tau} = 0$ in \eqref{Eq:receiverComputes}, 
it follows that the receiver can compute
$$
\left( \sourceVec{\source_{j_1}} + \sourceVec{\source_{\tau}},\sourceVec{\source_{p}}\right)
$$
from the vectors received on edges $(\source_{i_1},\receiver)$ and $(\node,\receiver)$.
Consequently, it follows that there exist a  linear
solution over $\field{q^n}$ for computing the linear
target function associated with the transfer matrix
$$
\begin{pmatrix}
\hat{T}_{j_1,j_1} & \hat{T}_{j_1,p} & \hat{T}_{j_1,\tau} \\
\hat{T}_{p,j_1} & \hat{T}_{p,p} & \hat{T}_{p,\tau} \\
\end{pmatrix}
$$
\begin{figure}[ht]
\begin{center}
\psfrag{sil}{{\large $\source_{p}$}}
\psfrag{spl}{{\large$\source_{\tau}$}}
\psfrag{si2}{{\large$\source_{j_1}$}}
\psfrag{n}{{\large$\node$}}
\psfrag{r}{{\large$\receiver$}}
\scalebox{.7}{\includegraphics{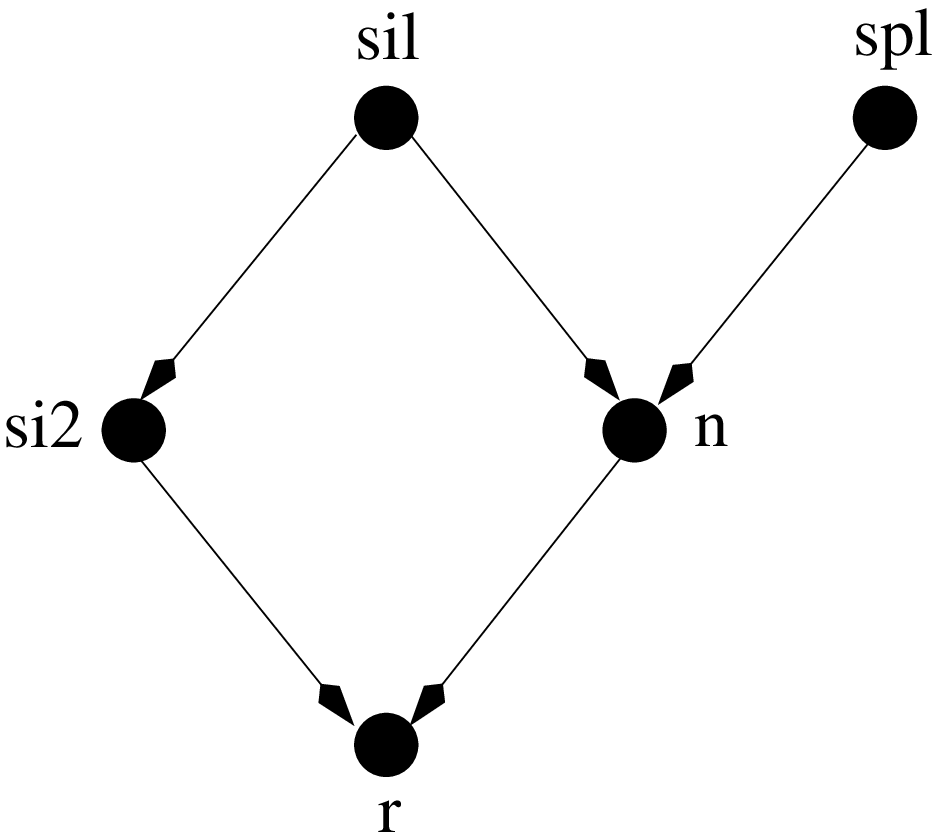}}
\end{center}
\caption{Subnetwork of $\Network_{P}$ used to show the equivalence between solving network $\Network_{P}$ and solving network $\Network_{\GBASISFIG}$.
} 
\label{Fig:reducedSum2}
\end{figure}
in the network shown in Figure~\ref{Fig:reducedSum2}.
It is easy to see that 
the existence of such a code implies a scalar linear
solution  for computing $f_1$ in $\Network_{\GBASISFIG}$.
This establishes the desired contradiction.

Finally, we show that $\mbox{\cut{\Network,T}} = 1$.
Let $C \in \cuts{\Network}$ be a cut such that $K_C \subset K \cup \{p,\tau\}$ 
(i.e, $C$ separates sources  from only the top and middle rows in the network $\Network_P$).
We have the following two cases:
\begin{enumerate}
\item
If $\source_{\tau} \notin K_C$, then it is easy to see that $\card{C} \ge \card{K_C}$.
Similarly, if $\source_{\tau} \in K_C$ and $\source_{p} \notin K_C$, then again  $\card{C} \ge \card{K_C}$.
Consequently, we have
\begin{align}
\frac{\card{C}}{\rank{T_{K_C}}} & \ge \frac{\card{C}}{\card{K_C}} & \Comment{$\rank{T_{K_C}} \le \card{K_C}$} \nonumber \\
																& \ge 1. & \Comment{$\card{C} \ge \card{K_C}$} \label{Eq:minCutForNP1}
\end{align}
%
\item  If $\source_{\tau} \in K_C$ and  $\source_p \in K_C$, then from Figure~\ref{Fig:reducedSum2}, $\card{C} = \card{K}+1$ and $K_C = K \cup \{p,\tau\}$.
Moreover, the index set $K$ was constructed such that 
\begin{align} \label{Eq:linearRelation}
\hat{T}_{\tau} = \sum_{i \in K} \hat{T}_{i,\tau} \hat{T}_{i}.
\end{align}
Consequently, we have
\begin{align}
\rank{T_{K_C}} & = \rank{T_{K \cup \{p,\tau\}}}  & \Comment{$K_C = K \cup \{p,\tau\}$} \nonumber \\
							 & \le \card{K} + 1							& \Comment{\eqref{Eq:linearRelation}} \nonumber \\
							 & = \card{C}. \label{Eq:minCutForNP2}
\end{align}
\end{enumerate}
From \eqref{Eq:minCutForNP1} and \eqref{Eq:minCutForNP2}, we conclude that if $K_C \subset K \cup \{p,\tau\}$, then
\begin{align}
\frac{\card{C}}{\rank{T_{K_C}}} & \ge 1. \label{Eq:cutValueCase1}
\end{align}
For an arbitrary cut $C \in \cuts{\Network}$, let $c_{\bar{K}}$ denote the number of sources 
in $\bar{K}$ that are separated from the receiver by $C$ (i.e,   $c_{\bar{K}} = \card{K_C \cap \bar{K}}$).
We have
\begin{align}
\frac{\card{C}}{\rank{T_{K_C}}} & = \frac{\card{C}-c_{\bar{K}}+c_{\bar{K}}}{\rank{T_{K_C}}} \nonumber \\
																& \ge \frac{\card{C}-c_{\bar{K}}+c_{\bar{K}}}{\rank{T_{K_C-\bar{K}}}+c_{\bar{K}}} 
\label{Eq:minCutSplit}
\end{align}
Since each source in $\bar{K}$ is directly connected to the receiver,
$\card{C}-c_{\bar{K}}$ is equal to the number of edges in $C$ separating the sources in $K_C-\bar{K}$
from the receiver.
Consequently, from \eqref{Eq:cutValueCase1}, it follows that 
\begin{align} \label{Eq:particalLwrBnd}
\frac{\card{C}-c_{\bar{K}}}{\rank{T_{K_C-\bar{K}}}} & \ge 1.
\end{align}
Substituting \eqref{Eq:particalLwrBnd} in \eqref{Eq:minCutSplit}, we conclude that for all $C \in \cuts{\Network}$
$$
\mbox{\cut{\Network,T}} \ge 1.
$$
Since the  edge $(\source_{j_{\card{K}+1}},\receiver)$
disconnects the source $\source_{j_{\card{K}+1}}$ from the receiver, 
$\mbox{\cut{\Network,T}} \le 1$ is immediate and the proof of the theorem is now complete.
\end{proof}
We now consider the case in which the source alphabet is over the binary field. In this case, we have that the two function classes identified by Theorems~\ref{Th:solvable} and \ref{Th:insolvable} are complements of each other, namely either $T \sim (I \ \mb{1})$ or $T \sim (I \ P)$ with $P$ containing at least one zero element. 

%
\begin{theorem} \label{Lemma:binaryField}
Let $l \notin \{1,s\}$ and let $\transferM \in \field{2}^{l \times s}$.
If   $T \nsim (I \ \mb{1})$,
then there exists an $l \times (s-l)$ matrix $P$ such that $P$ 
has at least one zero element and $T \sim (I \ P)$.
\end{theorem}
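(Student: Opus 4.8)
The plan is to reduce everything to the single, binary-specific observation that over $\field{2}$ the only matrix with no zero entry is the all-ones matrix. First I would recall the reduction established just before Theorem~\ref{Th:solvable}: since $T$ is full rank, a permutation $\Pi$ can be chosen so that the first $l$ columns of $T\Pi$ are independent, and taking $Q$ to be the inverse of that leading block gives $T \sim (I\ P)$ for some $l \times (s-l)$ matrix $P$ over $\field{2}$. The hypothesis $l \notin \{1,s\}$ ensures $1 \le s-l \le s-2$, so that $P$ is a genuine nonempty block; the excluded extremal cases $l \in \{1,s\}$ are precisely those already settled by Lemma~\ref{Th:extremeMatrix}. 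Thus I would start the argument from a fixed representative $T = Q\,(I\ P)\,\Pi$.

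The second and decisive step is to dichotomize on the entries of this $P$. Every entry lies in $\field{2}$, hence equals $0$ or $1$. If some entry is $0$, then $P$ itself is a matrix with a zero element and $T \sim (I\ P)$ is exactly the representation demanded by the theorem, so I am done. Otherwise every entry equals $1$; since $\mb{1}$ is by convention the all-ones $l \times (s-l)$ matrix, this means $P = \mb{1}$ and therefore $T \sim (I\ \mb{1})$, contradicting the standing hypothesis $T \nsim (I\ \mb{1})$. Hence the all-ones case is impossible and $P$ must contain a zero, which finishes the proof.

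I do not expect any computational obstacle here: once the full-rank reduction is in hand, the statement is essentially a one-line consequence of it. The only genuine content---and the point I would emphasize---is structural and specific to the binary field: ``$P$ has no zero entry'' collapses to ``$P = \mb{1}$'' exactly because $\field{2}$ has a single nonzero element. This collapse is what lets the forms $(I\ \mb{1})$ and $(I\ P)$ with $P$ having a zero jointly exhaust all admissible $T$ over $\field{2}$, yielding the complementarity of the classes in Theorems~\ref{Th:solvable} and \ref{Th:insolvable}. Over a larger field $\field{q}$ there are many zero-free matrices besides the all-ones matrix, so no such collapse occurs and the two forms no longer cover all admissible $T$; keeping the convention that $\mb{1}$ is the all-ones block of size $l \times (s-l)$, and invoking the reduction $T \sim (I\ P)$ correctly, are the only points requiring care.
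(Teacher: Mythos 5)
Your argument hinges on reading $\mb{1}$ in the hypothesis $T \nsim (I \ \mb{1})$ as the all-ones $l \times (s-l)$ matrix, and with that reading the theorem does collapse to the one-line dichotomy you give. But that is not the reading the paper intends or needs: the theorem exists to show that the class of Theorem~\ref{Th:solvable} (which requires $l = s-1$ and $u$ a \emph{column vector} of units) and the class of Theorem~\ref{Th:insolvable} are complements, so the $\mb{1}$ in the hypothesis is the single column of ones, and for $l < s-1$ the condition $T \nsim (I\ \mb{1})$ holds vacuously. Under your reading the complementarity would in fact fail: for $s-l \ge 2$ the all-ones block is itself equivalent to a block containing zeros (see below), so your version of the first class would sit inside the second, and a $T$ with $T \sim (I\ \mb{1}_{l\times(s-l)})$ would be covered by Theorem~\ref{Th:insolvable} but, on your account, excluded by the hypothesis. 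The paper's own proof confirms the intended reading: it notes that the hypothesis disposes only of the case $s-l=1$, and then does genuine work when $s-l>1$.

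The case your proof therefore fails to handle is: $s-l \ge 2$ and the canonical form $T \sim (I\ \bar{P})$ has every entry of $\bar{P}$ equal to $1$. Here $T \nsim (I\ \mb{1})$ still holds, yet $\bar{P}$ has no zero, so there is no contradiction and you are not done; you must exhibit a \emph{different} representative of the same equivalence class whose right-hand block contains a zero. The paper does this explicitly: take $Q = (\phi^{(1)}\ \phi^{(2)} \cdots \phi^{(l-1)}\ \mb{1})$, which is invertible over $\field{2}$, and let $\Pi$ transpose columns $l$ and $l+1$; then $Q\,(I\ \bar{P})\,\Pi = (I\ P)$, where each column of $Q\bar{P}$ equals $Q\mb{1}$, whose $i$-th entry is $1+1=0$ for $i \le l-1$. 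This step uses $l\ge 2$, $s-l\ge 2$, and characteristic $2$, and it is the only nontrivial content of the proof --- precisely the part your argument omits.
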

\begin{proof}
Since $T$ is assumed to have a full row rank, $T \sim (I \ \bar{P})$ for 
some $l \times (s-l)$ matrix $(I \ \bar{P})$ over $\field{2}$.
If $\bar{P}$ has $0$'s, then we are done.
Assume to the contrary  that $\bar{P}$ is a matrix of non-zero elements.
We only need to consider the case when $(s-l) > 1$ (since $T \nsim (I \ \mb{1})$).
For $i=1,2,\ldots,l-1$, 
let $\phi^{(i)}$ denote the $i$-th column vector of the $l \times l$ identity matrix.
Define 
$Q = (\phi^{(1)} \phi^{(2)} \cdots \phi^{(l-1)} \  \mb{1})$
and let $\Pi$ be a permutation matrix that interchanges the
$l$-th and $(l+1)$-th columns and leaves the remaining columns unchanged.
It is now easy to verify that 
\begin{align}
Q \ ( I \ \bar{P}) \ \Pi & = (Q \ Q \bar{P}) \ \Pi \nonumber \\
											   & = (I \ P) \label{Eq:equivalenceConst}
\end{align}
where $P$ is an $l \times s-l$ matrix with at least one $0$ element:
for $i \in \{1,2,\cdots,l-1\}$ 
\begin{align*}
P_{i,2} & = (Q\bar{P})_{i,2} \\
			  & = (Q \mb{1})_i \\
			  & = 1 + 1 \\
			  & = 0. 
\end{align*}
%
Thus, $( I \ \bar{P}) \sim (I \ P)$ and by transitivity we conclude that
$T \sim (I \ P)$ which proves the claim.
\end{proof}
%
%
\section{Conclusion} \label{Sec:conclusions}
%

We wish to mention the following open problems 
arising from this work.
\begin{itemize}
\item Is there a graph-theoretic condition that allows to determine whether  a given network is solvable with reference to a given linear function? We have provided an algebraic test in terms of the \grobner basis of a corresponding ideal, but we wish to know whether there is there an algorithmically more efficient test.  
\item We showed that $\mbox{\cut{\Network,T}} = 1$ is not sufficient to guarantee solvability for a certain class of linear
functions. A possible direction of future research is to ask whether there is a constant $c$ such that $\mbox{\cut{\Network,T}} \ge c$  guarantees solvability.
Alternatively, for every constant $c$, does there exist a network $\Network$ and a matrix $T$ such 
that $\mbox{\cut{\Network,T}} \ge c$ and $\Network$ does not have a linear solution for computing the linear
target function associated with $T$?
\end{itemize}

\clearpage
\begin{appendix}
\section{Appendix}
\begin{lemma} \label{Lemma:equivalenceLinearFunctions}
Let $\transferM \in \field{q}^{l \times s}$.
If $u \in \field{q}^{s-1}$ is a column vector of non-zero elements and $\transferM \sim (I \ u)$, 
then there exists a full rank matrix $Q$ and a column vector $u'$ of non-zero elements over $\field{q}$ such 
that $T = Q \ (I \ u')$.
\end{lemma}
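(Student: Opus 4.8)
The plan is to use the hypothesis that every entry of $u$ is nonzero to show that \emph{any} $s-1$ columns of $\transferM$ are linearly independent, and then to read off $Q$ and $u'$ directly from the columns of $\transferM$, with no permutation left over. First I would unfold $\transferM \sim (I \ u)$: by definition there are an invertible $Q_0$ and a permutation matrix $\Pi$ with $\transferM = Q_0 (I \ u) \Pi$. Writing $\mb{e}_1,\ldots,\mb{e}_{s-1}$ for the standard basis of $\field{q}^{s-1}$, the columns of $(I \ u)$ are $\mb{e}_1,\ldots,\mb{e}_{s-1},u$, so the columns of $\transferM$ are the $s$ vectors $Q_0\mb{e}_1,\ldots,Q_0\mb{e}_{s-1},Q_0 u$ listed in the order fixed by $\Pi$.

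The key step, and the one I expect to require the most care, is the linear-independence claim. Writing $u = (u_1,\ldots,u_{s-1})^t$ with each $u_j \neq 0$, the relation $u = \sum_{j} u_j \mb{e}_j$ gives $Q_0 u = \sum_{j} u_j (Q_0 \mb{e}_j)$; hence the $s$ columns of $\transferM$ satisfy a linear dependence all of whose coefficients are nonzero (the coefficient of $Q_0 u$ is $1$, that of $Q_0 \mb{e}_j$ is $-u_j$). Since $\rank{\transferM} = s-1$, the kernel of $\transferM$ is one-dimensional, so this dependence is unique up to scaling. Consequently every $s-1$ of the columns of $\transferM$ are linearly independent: a dependence among any $s-1$ of them would be a dependence among all $s$ columns having a zero coefficient, contradicting the uniqueness of the all-nonzero dependence.

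Applying this to the first $s-1$ columns, I would set $Q = (T_1 \ T_2 \ \cdots \ T_{s-1})$, which is then invertible and satisfies $T_i = Q\mb{e}_i$ for $i = 1,\ldots,s-1$. To produce $u'$, I transport the dependence above through $\Pi$: the columns of $\transferM$ satisfy $\sum_{i=1}^{s} c_i T_i = 0$ with every $c_i \neq 0$, and solving for the last column gives $T_s = Q u'$ where $u' = (-c_1/c_s,\ldots,-c_{s-1}/c_s)^t$. Each entry of $u'$ is a ratio of nonzero field elements, hence nonzero. Since the first $s-1$ columns of $\transferM$ are exactly the columns of $Q$ and the last column is $Q u'$, we have $\transferM = Q (I \ u')$ with $Q$ full rank and $u'$ a column vector of units, which is the assertion.
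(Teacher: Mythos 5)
Your proof is correct, and it reaches the same decomposition as the paper by a genuinely cleaner route. Both arguments take $Q$ to be the matrix of the first $s-1$ columns of $\transferM$ and set $u' = Q^{-1}\transferM_s$, so the two things to verify are that $Q$ is invertible and that $u'$ has no zero entry. The paper handles these separately: it proves invertibility of $Q$ by a case analysis on whether the column $\bar{Q}u$ appears among the first $s-1$ columns of $\transferM$, deriving a contradiction with the full rank of $\bar{Q}$; it then proves that $u'$ is entrywise nonzero by a second argument, extracting from $(I\ u) \sim (I\ u')$ a factorization $(I\ u) = P\,(I\ u')\,\Pi$, splitting into cases according to whether $P$ is a permutation matrix or contains $u$ as a column, and finishing with a Hamming-weight count. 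Your single observation --- that the kernel of $\transferM$ is one-dimensional and spanned by a vector with all coordinates nonzero, so that every nontrivial dependence among the columns has all coefficients nonzero and hence every $s-1$ columns are independent --- yields both facts at once and eliminates the second case analysis entirely. The only point you use without comment is $\rank{\transferM} = s-1$, which is immediate since rank is preserved under left multiplication by an invertible matrix and column permutation; there is no gap.
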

\begin{proof}
Let $Q$ denote the matrix obtained by collecting the first $(s-1)$ columns of $T$.
We will first show that the matrix $Q$ is full-rank.
After factoring out $Q$, we then prove that the last column must
have non-zero entries.

Since $T \sim (I \ u)$, there exists a full-rank matrix $\bar{Q}$ and a permutation matrix 
$\bar{\Pi}$ such that
\begin{align}
T & = \bar{Q} \ (I \ u) \ \bar{\Pi} \nonumber \\
  & = (\bar{Q} \ \bar{Q} u) \ \bar{\Pi}. \label{Eq:factorT}
\end{align}
From \eqref{Eq:factorT}, the columns of $Q$ are constituted by the columns of $\bar{Q}$ in which case $Q$ is full-rank, or columns of $Q$ contains $(s-2)$ columns of $\bar{Q}$ and $\bar{Q} u$.
We will now show that 
the vector $\bar{Q} u$ cannot be written as a linear combination of any set of $s-2$ column vectors
of $\bar{Q}$.
Assume to the contrary that there exist $a_j \in \field{q}$ for $j \in \{1,2,s-2\}$ such that
\begin{align}
\bar{Q} u & = \sum_{j=1}^{s-2} a_j \bar{Q}_{j} \label{Eq:dependence}
\end{align}
where $\bar{Q}_{j}$ denotes the $j$-th column of $\bar{Q}$.
Let $a$ denote the vector such that $a_{j} = a_j, j=1,2,\ldots s-2$, and $a_{s-1} = 0$.
We have
\begin{align}
u-a           & \neq 0 & \Comment{$u_{s-1}\neq 0$ and $a_{s-1} = 0$} \nonumber \\
\bar{Q} (u-a) & = 0 & \Comment{\eqref{Eq:dependence}}. \label{Eq:contradiction}
\end{align}
\eqref{Eq:contradiction} contradicts the fact that $\bar{Q}$ is full-rank.
Hence  $a_i$'s satisfying \eqref{Eq:dependence} do not exist and consequently, 
$Q$ is a full-rank matrix.
We now have
$$
T = Q(I \ u')
$$
where $u' = Q^{-1} T_{s}$ and hence $T \sim (I \ u')$. 
Furthermore, $T \sim ( I \ u)$ and $ T \sim (I \ u')$ implies that $( I \ u)   \sim (I \ u')$.
Thus, there exists a full-rank matrix $P$ and a permutation matrix $\Pi$ 
such that
\begin{align}
%
( I \ u) & = P  \ (I \ u') \ \Pi & \nonumber \\ 
							& = (P \ P u') \ \Pi. \label{Eq:eqiv}
\end{align}
Let $\phi^{(i)}$ denote the $i$-th column of $I$.
It follows from \eqref{Eq:eqiv} that either 
$(a)$ $P u' = u$ and $P$ itself is an $(s-1) \times (s-1)$ permutation matrix, or
$(b)$ For some $j \in \{1,2,\ldots,s-1\}$, 
$j$-th column of $P$ is $u$,  
and the remaining columns must constitute the $s-2$ columns
$\phi^{(1)},\phi^{(2)},\ldots,\phi^{(\tau-1)},\phi^{(\tau+1)},\phi^{(s-1)}$ of $I$ for some $\tau$.
If $(a)$ is true, then $u'=P^{-1}  u$ and the elements of $u'$ are non-zero 
since $P^{-1}$ is another permutation matrix.
If $(b)$ is true, then $P u' = \phi^{(\tau)}$ and it must be that $u'_j \neq 0 $
(if $u'_j = 0 $, then $(P u')_{\tau} = 0$ which contradicts $P u' = \phi^{(\tau)}$).
Let $L = \{i : i \neq j, \ \mbox{and} \ u'_i \neq 0\}$.
We must have
\begin{align} \label{Eq:1}
\phi^{(\tau)} = u'_j  u + \sum_{i \in D} u'_i \  \phi^{(j_i)}.
\end{align}
If we denote the number of non-zero entries in a vector $u$ by
$\card{u}$,
then we have
\begin{align}
 1 & = \card{\phi^{(\tau)}} \nonumber \\
   & \ge \card{u'_j  u} - \card{D} & \Comment{\eqref{Eq:1}} \nonumber \\
   & = (s-1) - \card{D} \nonumber \\
   & \ge 1 & \Comment{$\card{D} \le s-2$} \label{Eq:sandwich}
\end{align} 
From \eqref{Eq:sandwich}, it follows that $\card{D} = s-2$ and consequently that
every element of $u'$ is non-zero. The proof of the lemma is now complete.
\end{proof}

\end{appendix}
%
%

%
\end{document}